\renewcommand\footnotetextcopyrightpermission[1]{} % removes footnote with conference information in first column
\begin{document}

\title{Tiny Buffer TCP for Data Center Networks}
%\subtitle{paper \#110, 12 pages}

\author{Lixia Xiong, Nan Li}
\affiliation{%
 \institution{Huawei Datacom Research Department}
 \city{Beijing}
 \state{China}}
\email{{xionglixia, lee.linan}@huawei.com}

\author{Haoyu Song}
\affiliation{%
  \institution{Futurewei Technologies}
  \city{Santa Clara}
  \state{USA}}
\email{haoyu.song@futurewei.com}

\begin{abstract}

A low and stable buffer occupancy is critical to achieve high throughput, low packet drop rate, low latency, and low jitter for data center networks. It also allows switch chips to support higher port density,  larger lookup tables, or richer functions. Tiny Buffer TCP creatively uses the common RED-based ECN with two novel congestion-window adjustment schemes to significantly reduce the required buffer size. Aiming to eliminate the residual packets in the bottleneck queue, Queue Canceling Decrease amortizes the ideal window reduction to the same number of flows so as to minimize the impact to active flows. In order to keep the buffer occupancy low and stable, Reduced Additive Increase recovers the flow window at a slower pace than normal. We implemented TBTCP in Linux kernel and conducted $ns2$-based simulations and real network-based tests. Our results show that compared to DCTCP,  TBTCP reduces the switch buffer requirement by more than 80\%, increases the bandwidth utilization by up to 15\%, improves the FCT performance by up to  39\%, and achieve a 71\% better RTT fairness index.

\end{abstract}

\maketitle

\section{Introduction}

Data Center Network (DCN) is critical in supporting cloud applications. Numerous works have been published in recent years to improve the performance of DCN by proposing new switch design~\cite{conga, fastpass}, new transport protocol~\cite{dctcp, bbr, timely, phost, dx}, and the combination of both~\cite{pfabric, d3, dcqcn, NDP}.  

While the commodity switch hardware is slow to pick up new algorithms and new architectures, a more appealing and effective option is to \textbf{only modify the end-host transport protocol software} for better performance, given a data center is basically an autonomous system. Indeed, DCTCP~\cite{dctcp} has been included in official Linux kernel release and adopted in many data centers as the default transport protocol. It also serves as the foundation for some newer algorithms~\cite{dcqcn, pias}. In a multi-tenant data center where the guest's transport protocol cannot be modified, software innovation at edge is still preferred over installing new switch hardware. It is shown that new transport algorithms can be emulated by host hypervisors without modifying guest OSes~\cite{acdctcp, vcc}.

Buffer is the main facility in switches to work in conjunction with end-host congestion control algorithms.  It helps switches retain bandwidth utilization and absorb traffic burst. The buffer occupancy, which is reflected in queue depth, is a direct signal for network congestion status. While some chip vendors boast a bigger buffer at the cost of other resources, we argue that this can do more harm than good. 

First, buffer poses a serious design challenge for switches. Because the off-chip memory is slow and demands a large amounts of I/Os, the state-of-art switch chips have to use on-chip SRAM as packet buffer.  A chip can afford no more than a few tens of megabytes of memory due to the foot print of SRAM cells. In a typical switch chip with balanced resource distribution, roughly 1/3 of the chip area is dedicated to memory, 1/3 to I/Os, and the remaining 1/3 to logics. The memory is further partitioned to support lookup tables/queues and packet buffer. As the switch chips become more flexible and programmable~\cite{rmt}, the requirement for more logics and lookup tables escalates. Meanwhile, the throughput and port density of switch chips continue to increase. A 12.8Tbps switch chip, which includes 128x 100Gbps ports, is on the horizon. This trend implies that I/Os will claim more chip area and the average buffer size per port will shrink. If we are able to reduce the buffer size, we can make switch chips scale to higher throughput and port density, or reserve more resources for richer functions and higher flexibilities. 

Second, data center applications have strict latency and throughput requirements. Simply using large buffer to absorb bursts is lazy and ineffective. An unattended large buffer can lead to poor performance (i.e., long queuing delay, high jitter, and excessive buffer bloat). For example, a 1MB filled buffer introduces an extra 200us of queuing delay at a 40Gbps port, while the RTT of a light-loaded DCN is less than 250us~\cite{dctcp}. In an extreme case, the 20MB fully shared buffer in a switch can be occupied by a single congested port and the worst-case queuing delay will be up to 4ms.

The delay-sensitive applications in data centers expect lower RTT (e.g., <100us) and smaller RTT jitter (e.g., <50us). The only hope to achieve this performance target is to keep the switch buffer occupancy near zero and stable regardless of the network condition. We believe the end-host congestion control still has the potential to help achieve this target. Tiny Buffer TCP (TBTCP) is therefore designed to meet the modern DCN's need: lowering the chip buffer requirement while improving the flow transport performance. 

%We make the following contributions in this paper: First, we analyze DCTCP's performance issue and figure out its performance loss is due to the high and volatile queue depth which also lead to a large buffer size requirement (Sec~\ref{motivation}). Second, we study the relationship between queue depth and sender congestion window with the goal of realizing a tiny buffer switch design and making flows work near the optimal RTT and throughput point (Sec.~\ref{decouple}). Third, in light of the insights, we design the TBTCP algorithm which uses a commodity switch feature and two unconventional window adjustment techniques (Sec.~\ref{algorithm}). TBTCP dampens the influence of the concurrent flows to the buffer size and maintains a low and stable queue depth. 

%We implement TBTCP in a Linux kernel and test it in both a $ns2$ simulator (Sec.~\ref{eval}) and a real network testbed (Sec.~\ref{test}). TBTCP's implementation cost in host is lower than DCTCP. TBTCP does not request any special hardware feature other than the ubiquitous RED-based ECN feature in commodity switches (Sec.~\ref{imp}). The evaluations confirm TBTCP outperforms DCTCP in almost every performance indicator. We summarize the rationality and issues of TBTCP (Sec.~\ref{summary}) and discuss the related work (Sec.~\ref{related}). We position TBTCP as a powerful contender of DCTCP in data centers. We plan to open source TBTCP and will release the code download link after the double-blind paper review process.

\section{DCTCP's Performance Issue}~\label{motivation}

A DCN spans only a few hops in a limited geographic distance, so the packet propagation and processing delay is small. The queuing delay is the main contributor to packet transport latency, and the RTT is directly proportional to the queue depth. Many Active Queue Management (AQM) algorithms have been developed to drop or mark packets based on different queue conditions~\cite{pie, codel, cedm, mqecn}. Since these algorithms are designed to interact with normal TCP, their performance cannot meet the data center requirements.

%\subsection{DCTCP's Buffer Size Still Too Large}~\label{dctcpissue}

For data center applications, Flow Completion Time (FCT) is the most important criterion in measuring the data center network performance~\cite{fct}. The relationship between FCT and queue depth is subtle. For small flows, only a few RTTs are needed to finish the packet transmission, which prevent the flow window from growing large. In this case, FCT is sensitive to the queuing delay, making it desirable to keep the queue depth shallow. If the queue is deep, the queuing delay is prolonged and the probability of buffer overflow is increased. Packet drops for small flows are especially catastrophic because the RTO can be many times longer than the normal flow FCT. For large flows, while it is important to avoid buffer overflow, it is equally important to avoid the buffer underflow in order to maintain the full throughput.  

In summary, the bottleneck queue depth should stay as low as possible, but not completely empty if active flows are present. If this is achieved under any traffic condition, the FCT for both large and small flows can profit. In data centers, DCTCP presents a better FCT performance than TCP~\cite{dctcp}, which means that DCTCP's buffer control is better than TCP. However, DCTCP is still far from ideal. Modern data center applications (e.g., big data analytics) raise the performance requirement bar higher than what DCTCP can offer, which begs for further improvements. Next, we examine the underlying relationship between DCTCP's performance and the queue depth/buffer size in switches.

DCTCP uses a single queue threshold $k$ for ECN mark. If the queue depth exceeds $k$ packets when enqueuing a packet, the packet is marked. The ECN mark is piggybacked by a returning ACK packet from the receiver to the sender.  The sender counts the ratio of marked packets in a flow and adjusts the flow's congestion window accordingly.   Specifically, the estimated ratio $\alpha$ is updated every RTT as follows, in which $g$ is the weight parameter and $F$ is the actual counted mark ratio in the last window:

\begin{equation}
\alpha \leftarrow (1-g)\alpha+gF 
\end{equation}

In response to an ECN mark, the sender updates its congestion window as:

\begin{equation}\label{eq_dctcp}
\texttt{cwnd} \leftarrow \texttt{cwnd}(1-\alpha/2)
\end{equation}

Theoretical analysis shows that the maximum queue depth (as well as the buffer size) should be at least $k+n$ (where $n$ is the number of concurrent flows) to avoid buffer overflow. Meanwhile, $k$ should be at least $(C*\texttt{RTT})/7=\texttt{BDP}/7$ (where $C$ is the bottleneck link bandwidth) to avoid buffer underflow. Although $k$ is smaller than the BDP, it is proportional to it and can be of large value. 

Consider a 40Gbps link and a 250us RTT. In this case, $k$ is 178KB or equal to 118 MTU-sized packets. Even worse, the actual queue depth can be anywhere between $0$ to $k+n$. In reality, $n$ can be large (e.g., 1000 current flows are not uncommon in data center workloads). Therefore, the queuing delay and jitter can be large, and the switch needs a large buffer.
% (note that usually an extra headroom in buffer is required in addition to $k+n$ to absorb micro bursts)
In this example, the filled queue will introduce more than 300us extra delay. Figure~\ref{fig_dctb} illustrates the queue depth dynamics of DCTCP (as well as TBTCP for comparison). %More simulations and tests in Sec.~\ref{eval} and Sec.~\ref{test} confirm that the queue/buffer behavior of DCTCP matches the analysis well. 

\begin{figure}[!t]
\centering
\includegraphics[width=.5\columnwidth]{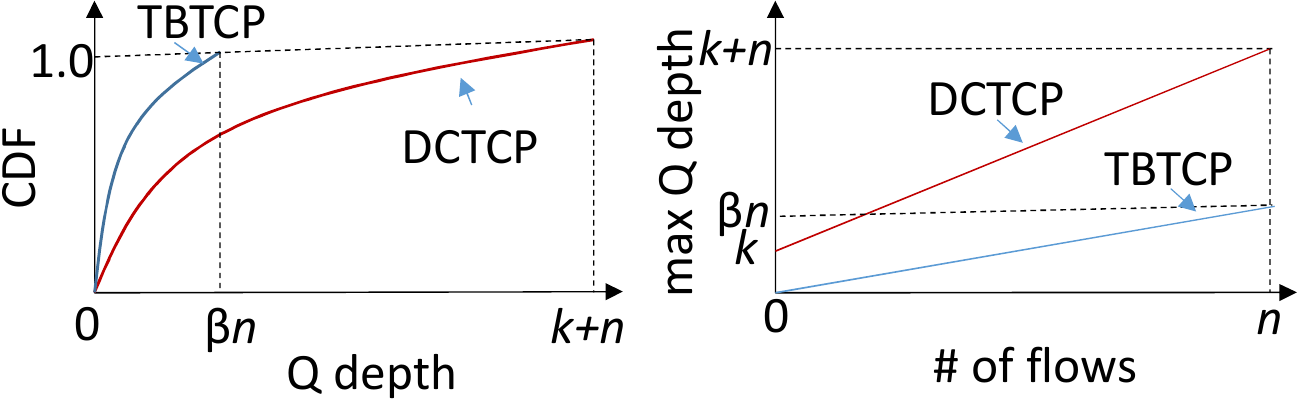}
\caption{Queue Depth for DCTCP and TBTCP}
\label{fig_dctb}
\end{figure}

This wide queue-oscillating range leads to poor FCT for small flows. The large buffer required is also a cost burden to switch. The buffer requirement of DCTCP is coupled with the BDP and the number of flows $n$, which seriously limits DCTCP's scalability. An interesting question is: \emph{can we lower the buffer size requirement so better FCT performance and lower switch cost can be realized simultaneously?} We believe DCTCP's issue is rooted in its coarse and imprecise congestion window adjustments. Next we study the optimal window adjustment strategy and show how a fine, agile, and precise window control can answer this question. 

\section{Fine Control the Buffer Size}\label{decouple}

\subsection{Achieve Optimal BDP}\label{qcd}

According to Gail and Kleinrok's analysis, the optimal network operating condition happens at the inflection point in Figure~\ref{fig_bdp}, where the in-flight data equals the Bandwidth-Delay Product (BDP) and the RTT is minimum~\cite{bbr}. At this point, the bottleneck link bandwidth is fully utilized but the bottleneck queue is empty. The current TCP algorithms, including DCTCP, tend to produce a larger amount of in-flight data, which deviates from the optimal point and causes long queue in switches. While BBR~\cite{bbr} uses direct network measurements to approach the optimal point, TBTCP strives to work as close to the optimal point as possible with a simpler scheme. 
  
\begin{figure}[!t]
\centering
\includegraphics[width=0.6\columnwidth]{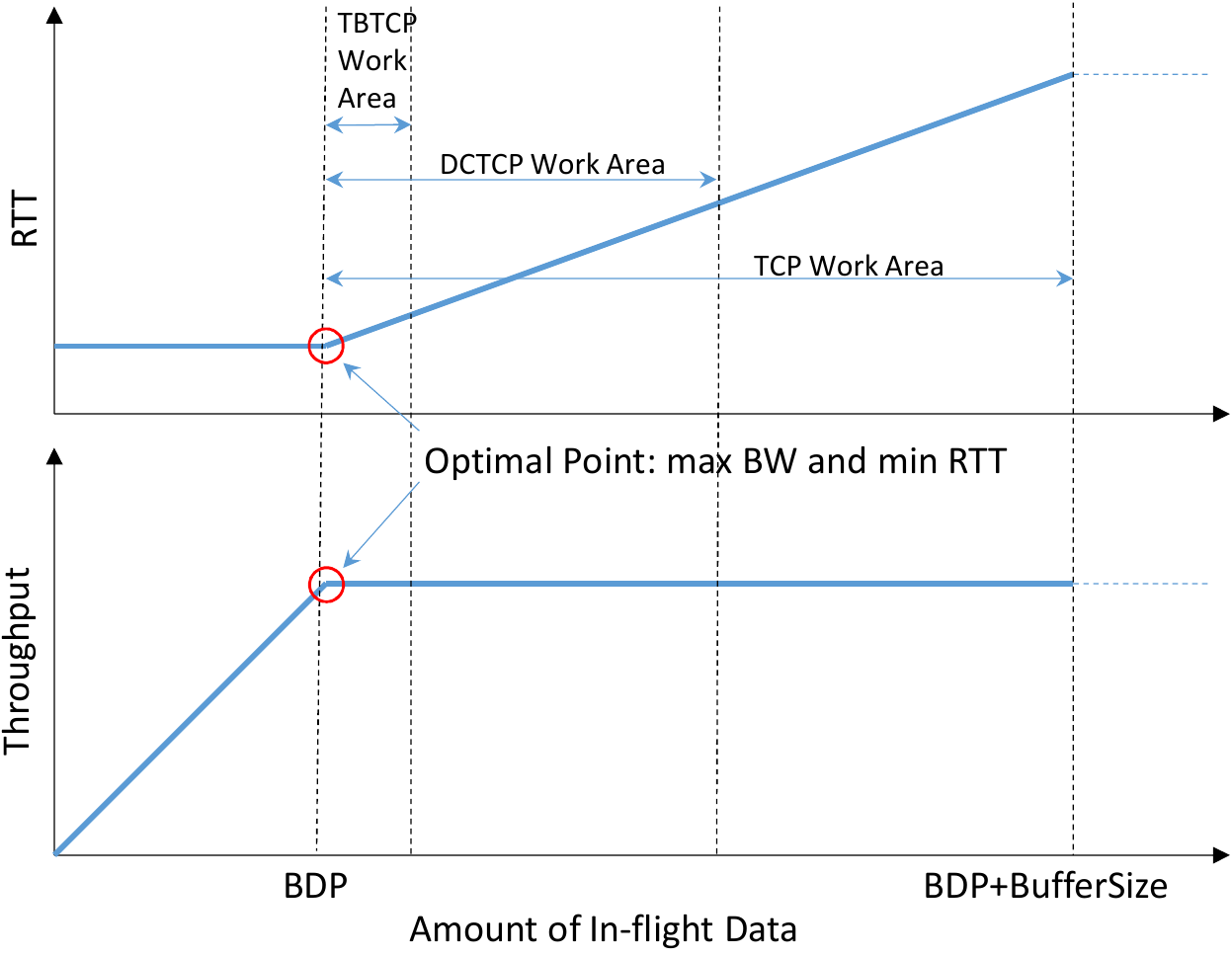}
\caption{TBTCP Towards Optimal Work Point}
\label{fig_bdp}
\end{figure}

In DCTCP, the congestion control algorithm makes the queue depth vary between $\texttt{max}\{0, k-n\}$ and $k+n$ with the center of $k$. In contrast, TBTCP aims to reduce the steady-state queue depth to the range of $\{0, \beta n\}$, where $\beta << 1$.

For convenience, we consider all the flows sharing the bottleneck link as one aggregated flow $F$, which window is the sum of the flow windows. The congestion avoidance algorithm tries to dynamically reduce $F$'s congestion window so as to eliminate the buffer bloat and maintain 100\% bottleneck bandwidth utilization. Assume the congestion window size of $F$ is $W_{max}$ and $W_{min}$ before and after a window adjustment. The following equation should hold true:

\begin{equation}\label{eq_c}
C = \frac{W_{max}}{\texttt{RTT}_{max}} = \frac{W_{min}}{\texttt{RTT}_{min}}
\end{equation}

In Equation~\ref{eq_c}, $\texttt{RTT}_{max} = 2T_p + Q/C$ and $\texttt{RTT}_{min}= 2T_p$, where $T_p$ is the one-way propagation delay and $Q$ is the bottleneck queue depth. Hence,
 
\begin{equation}
\frac{W_{max}}{2T_p + Q/C} = \frac{W_{min}}{2T_p}
\end{equation}

We can deduce that the right amount of window adjustment for $F$ in one RTT, $\delta$, is:

\begin{equation}\label{eq_delta}
\delta = W_{max} -W_{min} = \frac{QW_{max}}{2T_pC+Q} 
\end{equation}

We draw two conclusions from Equation~\ref{eq_delta}. First, the smaller the queue depth $Q$ is, the smaller the window adjustment $\delta$ needs to be made. For example, if $Q=2T_pC$, then $\delta = W_{max}/2$, which agrees with the classical TCP's behavior; if $Q=2T_pC/7$, as suggested in DCTCP, then $\delta = W_{max}/8$. This implies that DCTCP needs a much smaller window adjustment. As shown in Equation~\ref{eq_dctcp}, a factor of $\alpha$ is applied for the window adjustment. However, since $\alpha$ cannot track the queue depth well (recall that  $\alpha$ only tracks the ECN mark ratio for a single queue threshold $k$), the adjustment is inaccurate, leading to deep queue and high queue oscillation. 

Second, if the queue depth $Q$ is small enough, then $\delta \approx Q$.  This is because when $Q \ll 2T_pC$, we can get $W_{max} \approx 2T_pC = \texttt{BDP}$. Therefore, when the queue depth is small, we should try to reduce the window size by the exact amount that equals the queue depth in each RTT.  This makes the network work closer to the optimal point and avoid buffer underflow as shown in Figure~\ref{fig_bdp}.

Based on these observations, we develop the first key feature of TBTCP: Queue Canceling Decrease (QCD). That is, \emph{in each RTT, we reduce the overall flow window by exactly the size of the bottleneck queue depth.} Moreover, we distribute the window reductions to as many flows as possible so each affected flow only needs to reduce its window by one MSS. We defer the discussion on how QCD is realized to Section~\ref{algorithm}.

\subsection{Smooth Queue Oscillation}\label{rai}

Even though QCD can make precise window reduction, the following window recovery may be too fast, which can nullify the reduction effects and introduce drastic queue oscillations.  

For example, the amplitude of DCTCP's  queue oscillation goes up to $n$. We refer to this as the coupling issue between the buffer size and the concurrent flow count. The fundamental reason is the additive increase of congestion windows: in every RTT, a flow's window increases one Maximum Segment Size (MSS) after the slow start phase. The additive increase works in wide area networks, but appears to be excessive for DCN due to its short RTT. For example, if RTT is 100us, increasing a flow's \texttt{cwnd} by one is equivalent to adding $1500B*8b/B/100us = 120Mbps$ network bandwidth consumption. For 100 concurrent flows, the instant bandwidth surge is up to 12Gbps. As a result, the bottleneck queue becomes volatile and the flow performance suffers.  

To avoid the bandwidth surge, we would like to reduce the window recovery speed. Specifically, we take $m>1$ RTTs to increase a flow's \texttt{cwnd} by one MSS. This is equivalent to reducing the amount of window size increase in each RTT to $1/m$. 
Hence we develop the second key feature of TBTCP: Reduced Additive Increase (RAI). That is, \emph{in each RTT, we virtually increase a flow's \texttt{cwnd} by $\beta = \frac{1}{m} < 1$} (In real implementation, the window increase by 1 in $m$ RTTs). Note that the seemingly ``slow'' increase is offset by the fact that the reduction is small in the first place.

%If each flow increases \texttt{cwnd} by $\beta$ ($\beta = 1/m$) per RTT, we can prove that the queue depth will be stable at $\beta n$ in the steady state. 
We have the following theorem:

\begin{theorem}
Due to QCD and RAI, the bottleneck queue depth will be stable at $\beta n$ in one RTT and the overall \texttt{cwnd} approximates to the bottleneck BDP. 
\end{theorem}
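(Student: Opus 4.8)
The plan is to analyze the aggregate flow $F$ with a fluid (continuous-window) model indexed by RTT rounds $t=0,1,2,\dots$, and to show that the queue dynamics induced by QCD and RAI form a deadbeat controller whose unique fixed point is $\beta n$. First I would pin down the standing invariant that ties the window to the queue: whenever the bottleneck is saturated, the number of in-flight packets equals $C\cdot\texttt{RTT}_{min}+Q$, so the aggregate congestion window obeys $W_t=\texttt{BDP}+Q_t$, while if $W_t\le\texttt{BDP}$ the queue is empty, i.e.\ $Q_t=0$. This is just Equation~\ref{eq_c} rewritten, and it is the bridge between the quantity the algorithm actually controls (window) and the quantity the theorem is about (queue depth).

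Next I would write the per-round update of the aggregate window. In round $t$, RAI raises each of the $n$ flows' windows by $\beta$, contributing $+\beta n$; QCD lowers the aggregate by the observed queue depth, realized as $\min(Q_t,n)$ single-MSS reductions (the $\min$ records that at most $n$ distinct flows are available to absorb one-MSS cuts in a round). Hence $W_{t+1}=W_t+\beta n-\min(Q_t,n)$. Substituting the invariant $W_t=\texttt{BDP}+Q_t$ and cancelling the $\texttt{BDP}$ and $Q_t$ terms gives, in the regime $Q_t\le n$, the one-line conclusion $Q_{t+1}=\beta n$: the queue reaches the claimed level in a single RTT, and since $\beta n<n$ this regime is self-sustaining, so it then stays there and $W=\texttt{BDP}+\beta n$ is the fixed point.

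I would then dispose of the two transients. If $Q_t>n$ (a deep initial queue), the recursion gives $Q_{t+1}=Q_t-(1-\beta)n$, so the queue drains by a fixed amount $(1-\beta)n$ per round until it drops below $n$, after which the deadbeat step above applies; if the link starts underutilized ($W_t<\texttt{BDP}$, hence $Q_t=0$), QCD is inactive and RAI grows the window by $\beta n$ per round until $W_t$ exceeds $\texttt{BDP}$, whereupon the same one-step convergence kicks in. Finally, at the fixed point $Q=\beta n$ and $W=\texttt{BDP}+\beta n$, and since $\beta\ll 1$ (and $n$ is not pathologically large relative to the BDP) we get $W\approx\texttt{BDP}$ — exactly the inflection point of Figure~\ref{fig_bdp} — which also recovers $\delta\approx Q$ of Equation~\ref{eq_delta} a posteriori, confirming internal consistency of the design.

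The main obstacle is not the algebra (which is trivial) but justifying the idealizations baked into the recursion. Three points need care: (i) the timing/observability assumption — in practice the $Q_t$ that QCD subtracts in round $t$ is the queue sampled via ECN feedback roughly one RTT earlier, so I would argue that inserting a one-round delay either leaves the fixed point $\beta n$ unchanged and merely stretches convergence to two rounds, or at worst replaces ``in one RTT'' by ``within $O(1)$ RTTs'', without destabilizing it; (ii) the homogeneity assumption — that all $n$ flows are simultaneously in the RAI/QCD regime (none slow-start-, receiver-, or application-limited) and that the $Q_t$ single-MSS cuts land on $Q_t$ distinct flows, which is precisely what the mechanism deferred to Section~\ref{algorithm} must provide; and (iii) the fluid approximation — treating per-flow integer MSS windows as a smooth aggregate, for which I would note the conclusion holds up to $O(1)$ MSS rounding and becomes tight as $n$ grows. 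I would also state explicitly that ``$n$'' denotes a fixed count of active flows over the window of interest, so flow churn lies outside the theorem's scope.
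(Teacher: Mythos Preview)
Your proposal is correct and follows essentially the same argument as the paper: write the per-RTT aggregate update $W' = W + \beta n - Q$, use the invariant $W = \texttt{BDP} + Q$ to conclude $Q' = \beta n$ in one step, and then observe the fixed point is self-sustaining with $W \approx \texttt{BDP}$. You are considerably more careful than the paper (which omits the transient cases, the $\min(Q_t,n)$ clamp, and any discussion of the idealizations), and your $\min$ is in fact slightly more conservative than the paper's own mechanism---since a flow can receive several ECNs in a round and cut by more than one MSS, the paper's QCD subtracts the full $Q_t$ even when $Q_t>n$---but this only strengthens the one-RTT claim rather than weakening yours.
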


%Below is the sketch of the proof. 

\begin{proof}
Assume that at time $t$ the queue depth is $Q$ and the total window size of the $n$ flows is $W$. At time $t+$RTT, $W$ is reduced by $Q$ due to QCD; meanwhile, $W$ is increased by $\beta n$ due to RAI. Hence, the new overall window size $W' \leftarrow W+\beta n-Q$. Since $W'$ is increased by $\delta = \beta n-Q$, the new queue depth $Q'$ will become $Q + \delta = \beta n$. 

Following the same deduction process, at time $t+2$RTT, $W'$ (as well as $Q'$) is reduced by $\beta n$ due to QCD and increased by $\beta n$ due to RAI, so $W'$ (as well as $Q'$) is no longer changed (although the distribution of $W'$ among the $n$ flows is changing).  

That is, the queue depth $Q'$ is stable at $\beta n$ and the overall \texttt{cwnd} size $W'$ is stable at $W+\beta n-Q$ after the first RTT, if the flow number $n$ remains unchanged.

Based on the analysis in Section~\ref{qcd},  $W_{max} \approx $ BDP due to the small $Q'$.
\end{proof}

Since the queue depth can grow to at most $\beta n$, TBTCP's queue oscillation amplitude is more than $1/\beta$ times smaller than DCTCP's.

\section{TBTCP Algorithm}~\label{algorithm}

%Based on the above analysis, we describe the TBTCP algorithm. 
Like DCTCP, the TBTCP algorithm contains three components which work in switch, receiver, and sender, respectively.

\subsection{RED-based Marking on Switches} 

Random Early Detection (RED) is a commodity active queue management scheme in modern switches. It monitors the instantaneous queue size and marks packets with the CE codepoint 
based on statistical probabilities. Specifically, it sets two queue depth thresholds, \texttt{t\_min} and \texttt{t\_max}. The packets are randomly marked and the marking probability linearly increases from 0 to \texttt{P\_max} as the queue depth grows from \texttt{t\_min} to \texttt{t\_max}. 

TBTCP takes advantage of the RED scheme. Recall that QCD requires the overall flow window to be reduced by the size of the bottleneck queue depth in each RTT. To realize QCD, we want to mark exactly $Q$ packets when the queue depth is $Q$.  

%Specifically, when $Q < \texttt{t\_min}$, the packet marking probability is 0; when $Q > \texttt{t\_max}$, the packet marking probability is \texttt{P\_max}; when $Q$ is between \texttt{t\_min} and \texttt{t\_max}, the packet marking probability is $\frac{Q-\texttt{t\_min}}{\texttt{t\_max}-\texttt{t\_min}}\texttt{P\_max}$.

From Equation~\ref{eq_delta}, we can derive that $\frac{Q}{2T_pC+Q}$ is the ideal packet marking probability to ensure the proper window adjustment. That is: 

\begin{equation}\label{eq_pq}
p(Q) = \frac{Q}{\texttt{BDP}+Q} 
\end{equation}

This is because in one RTT, $BDP+Q$ packets are sent through the bottleneck buffer. With the marking probability $p(Q)$, exact $Q$ packets will be marked. This simplified analysis assumes that in one RTT the queue depth remains unchanged and the synchronized window adjustments happen at each RTT boundary. In reality, the queue depth constantly changes. However, the number of marked packets in an RTT is bounded by the largest queue depth during this RTT.

The ideal marking probability graph appears to be a curve, but RED can only provide a polyline graph. We need to fit the two graphs so that RED can approximate the ideal marking probability. As shown in Figure~\ref{fig_red}, we can set $\texttt{t\_min} =  0$ and $\texttt{P\_max} = 1$, so the actual marking probability $p(Q) = \frac{Q}{\texttt{t\_max}}$.  By choosing the proper \texttt{t\_max}, we can get a good approximation of the marking probability. 
%We discuss the implementation issues in real switches in Section~\ref{imp}. For algorithm simulations in Section~\ref{eval}, we simply use the ideal marking probability. 

\begin{figure}[!t]
\centering
\includegraphics[width=0.6\columnwidth]{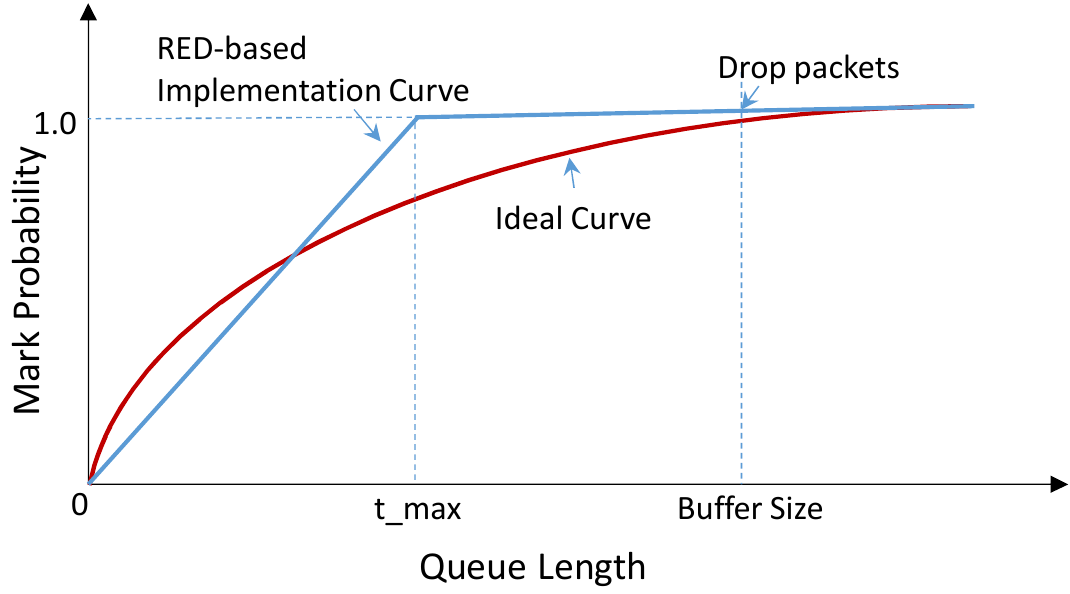}
\caption{RED ECN Probability Curve for QCD}
\label{fig_red}
\end{figure}

\subsection{ECN Echo from Receivers} 

Once a marked packet is received at a receiver, the receiver needs to echo the information back to the sender by setting the ECN-Echo flag in an ACK packet. In case the delayed-ACK feature is preferred, we modify the delayed-ACK behavior to ensure the timely feedback. Whenever a marked packet is received, an ACK with the ECN-Echo flag set for the accumulated packets so far is immediately sent; otherwise, a normal delayed ACK is sent for every a few fixed number of packets. %Such feedback mechanism guarantees that each ECN corresponds to one timely window reduction request to a target flow. 
 
%Due to the generally shallow queue depth, the number of marked packets is small. Such packets are also dispersed into multiple flows, so the corresponding ACKs with the ECN-Echo flag set are sporadic for each flow.  

\subsection{Congestion Control at Senders}

Senders host TBTCP's core congestion avoidance mechanisms, QCD and RAI.  Other features, such as slow start and retransmission on packet loss, remain the same as DCTCP. 

\noindent\textbf{Queue Canceling Decrease:}
The RED-based ECN guarantees that in each RTT, exactly $Q$ packets are marked when the bottleneck queue depth is $Q$, so every time a flow sender receives an ECN, it can simply reduce its \texttt{cwnd} by one MSS. %This realizes QCD. 

The total window reduction is amortized to up to $Q$ randomly selected flows. The selected flows are referred as victim flows. Clearly, since large flows send more packets in an RTT, they have a higher probability to become victim than small ones. This is in line with the idea that small flows should be less disturbed.

%We further enforce two policies:  1) A flow's \texttt{cwnd} can never be reduced by more than 50\% of its orignial value, and 2) a flow can reduce its \texttt{cwnd} only if it is greater than 3 MSS. The necessity of the first policy will become clear when we discuss the real implementation in Linux. The second policy prevents small flows from becoming victim too early. Note that if a small flow receives an ECN, it still goes out of the slow start state, although it may not need to reduce its  \texttt{cwnd}.

\noindent\textbf{Reduced Additive Increase:}
After the slow start stage (i.e., when the flow receives its first ECN), a flow's \texttt{cwnd} increases $\beta$ every RTT.  This means that after every $1/\beta$ RTTs, the flow can actually increase its \texttt{cwnd} by one \texttt{MSS}. For example, if $\beta$ is set to 0.1, then each flow's \texttt{cwnd} will be increased by one for every 10 RTTs. By doing so, the bottleneck queue depth will become stable at $0.1n$. 

%Since RAI slows down the flow window recovery from the decrease, one may be concerned that the network convergence time and FCT are negatively affected. In reality, TBTCP maintains a shallow and stable queue which means QCD only needs to reduce a few flows' windows by a single unit, so RAI's slow window recovery does not necessarily impact the performance. The evaluation results based on both simulation (Sec.~\ref{eval}) and implementation (Sec.~\ref{imp}) confirm this point. 

\subsection{Analysis on Steady State Window} 

To analyze the steady state behavior of TBTCP, we assume that $n$ long-lived and synchronized flows, with the same RTT, share a bottleneck link of capacity $C$. In the steady state, the overall window size $W = \texttt{BDP}+Q$ and each flow's window size $w$ is:

\begin{equation}\label{eq_w}
w = \frac{W}{n} = \frac{\texttt{BDP}+Q}{n}
\end{equation} 

In each RTT, a flow's window is increased by $\beta$ due to RAI.  Now we consider QCD's effect. 

Since each packet is marked with a probability of $p(Q)$, if it starts with window size $w$, the expected new window size $w' = (w-1) p(Q) + w (1-p(Q)) = w-p(Q)$. That is, each sending packet will reduce the flow's window size by $p(Q)$. During one RTT, the flow sends $w$ packets. Therefore, due to QCD, the flow's window is reduced by $w p(Q)$.

To maintain the steady state, we need to make the window increase and decrease equal in each RTT. That is, $\beta = w p(Q)$. 

Substitute  Equation~\ref{eq_w} and \ref{eq_pq} into the above equation and we get:

\begin{equation}\label{eq_qmax}
\beta n = Q
\end{equation}

This exactly reproduces the key observation of the TBTCP algorithm: the steady state bottleneck queue depth is stable at $\beta n$. It suggests that, (1) when $\beta$ is fixed, the queue depth becomes shallower as the number of  flows is reduced; and (2) when the number of flows is fixed, a smaller $\beta$ leads to a shallower queue. 

%\noindent\textbf{Window Stability.}

\section{Algorithm Evaluation}~\label{eval}

To validate TBTCP, we begin by simulating the algorithm using a $ns2$ simulator. The network topology is a simple dumbbell in which $n$ flows share a single bottleneck link with the bandwidth of $C$. We monitor the bottleneck queue depth $Q$. All the TCP flows last 10 to 30 seconds. We enable the switch's  RED-based ECN feature and use Equation~\ref{eq_pq} to calculate the ECN marking probability.

\subsection{Queue Depth and Buffer Occupancy}

\textbf{TBTCP Queue Depth on $\beta$:} First, we examine the parameter $\beta$'s effect on TBTCP's performance. In this test, 100 flows share a 40Gbps link and RTT is set to 160us. Figure~\ref{fig_beta} shows that the stable queue depth is close to $\beta n$ with small swing. Even when $\beta$ is as small as 0.1, queue underflow does not occur, due to small queue oscillation. In the following simulations, we set $\beta$ to 0.1 to maximize the buffer reduction potential. 

\begin{figure}[!tbh]
\centering
\includegraphics[width=.5\columnwidth]{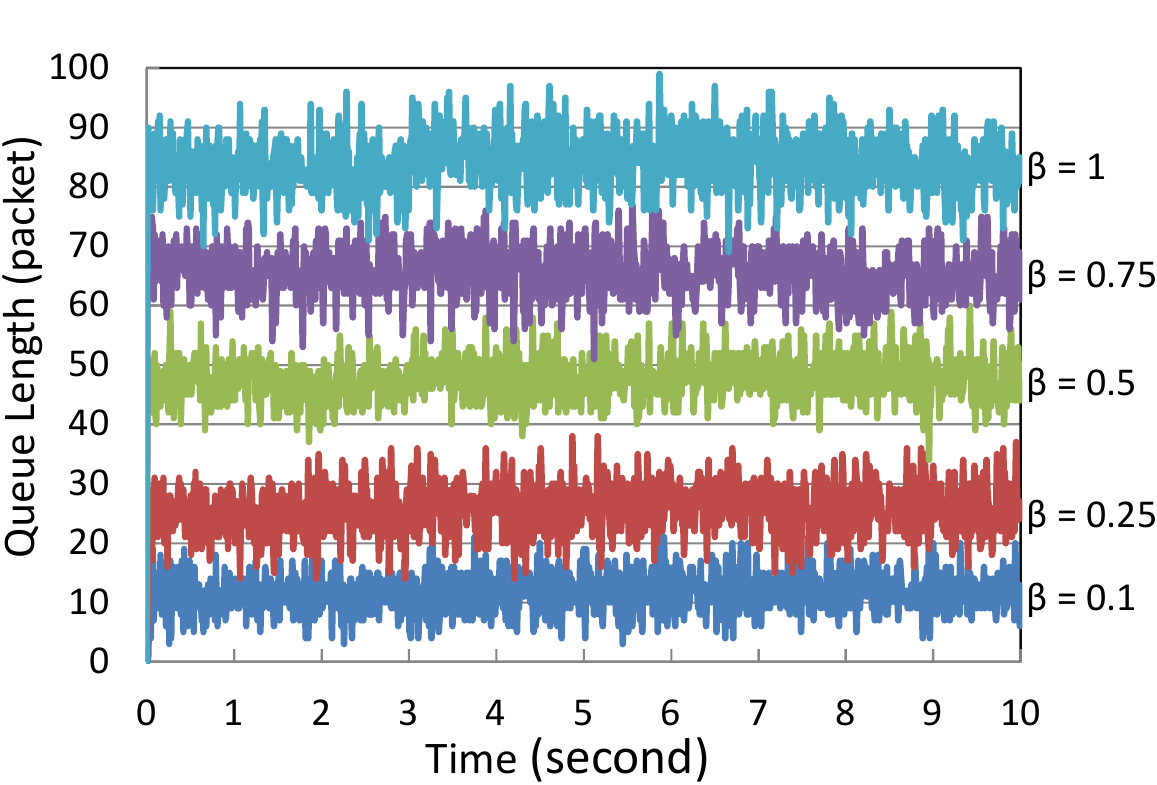}
\caption{TBTCP Queue Depth with Different $\beta$}
\label{fig_beta}
\end{figure}

\noindent\textbf{TBTCP with Ideal ECN Marking Probability:} When the ECN marking probability follows the ideal curve as shown in Figure~\ref{fig_red}, the TBTCP queue depth for different configuration is shown in Figure~\ref{fig_tbtcp_queue}.

\begin{figure}[!tbh]
\centering
\includegraphics[width=.5\columnwidth]{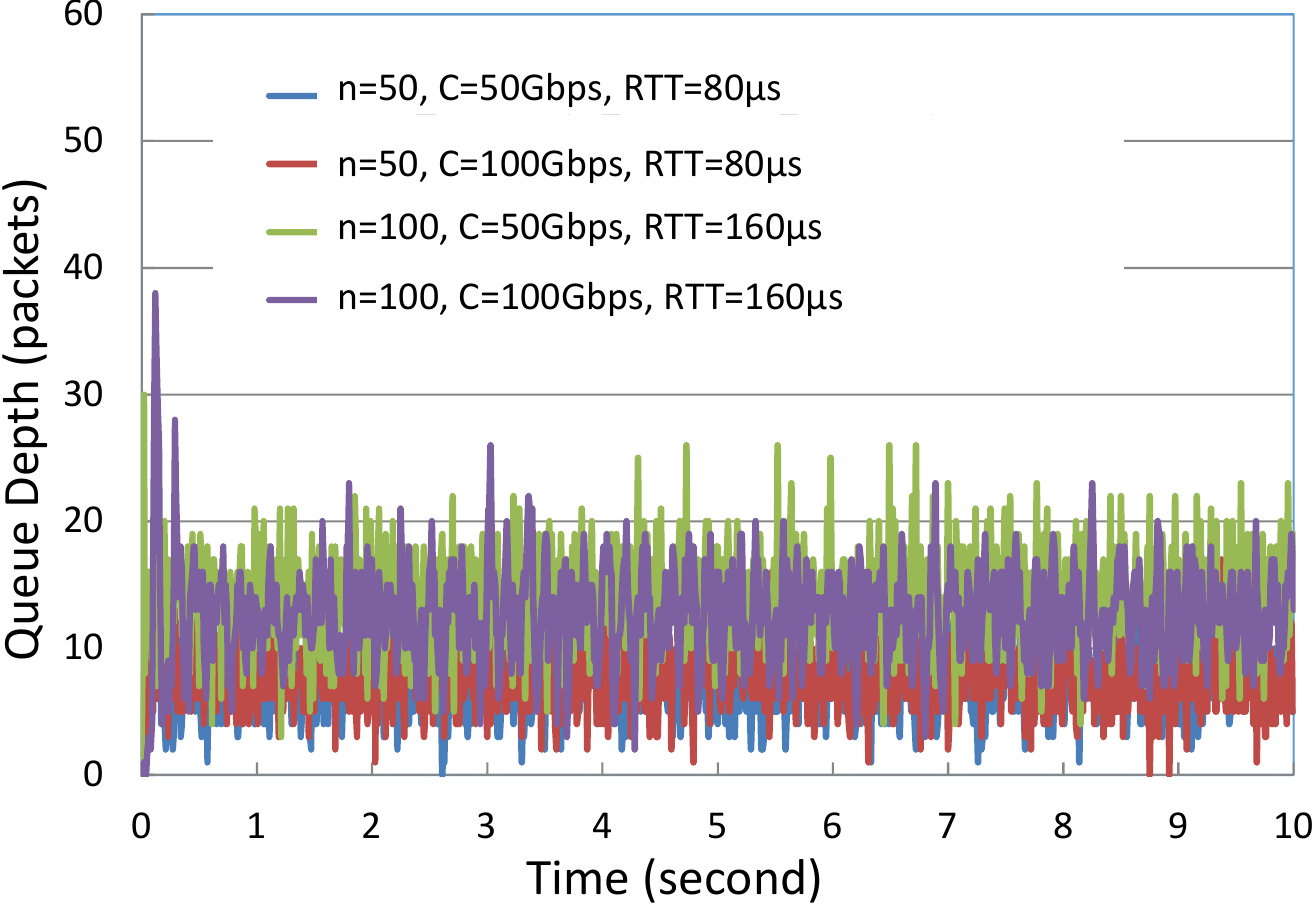}
\caption{TBTCP Queue Depth}
\label{fig_tbtcp_queue}
\end{figure}

In these simulations, the average queue depth is around 5 for 50 flows and 10 for 100 flows. Besides, the queue oscillation range is small --- about two times of the average queue depth. The queue underflow is rarely observed, which means the bottleneck bandwidth is fully utilized.

\noindent\textbf{RAI and Buffer Size:} To demonstrate that RAI can reduce the buffer size requirement even if  the number of flows is large, we implement a DCTCP variant for which the window recovery algorithm is replaced by RAI (a.k.a. DCTCP with RAI). The queue depth distribution, for different algorithms (i.e., the original DCTCP, DCTCP with RAI, and TBTCP) and under different configurations, are shown in Figure~\ref{fig_buffer}.

\begin{figure}[!tbh]
\centering
\includegraphics[width=.8\columnwidth]{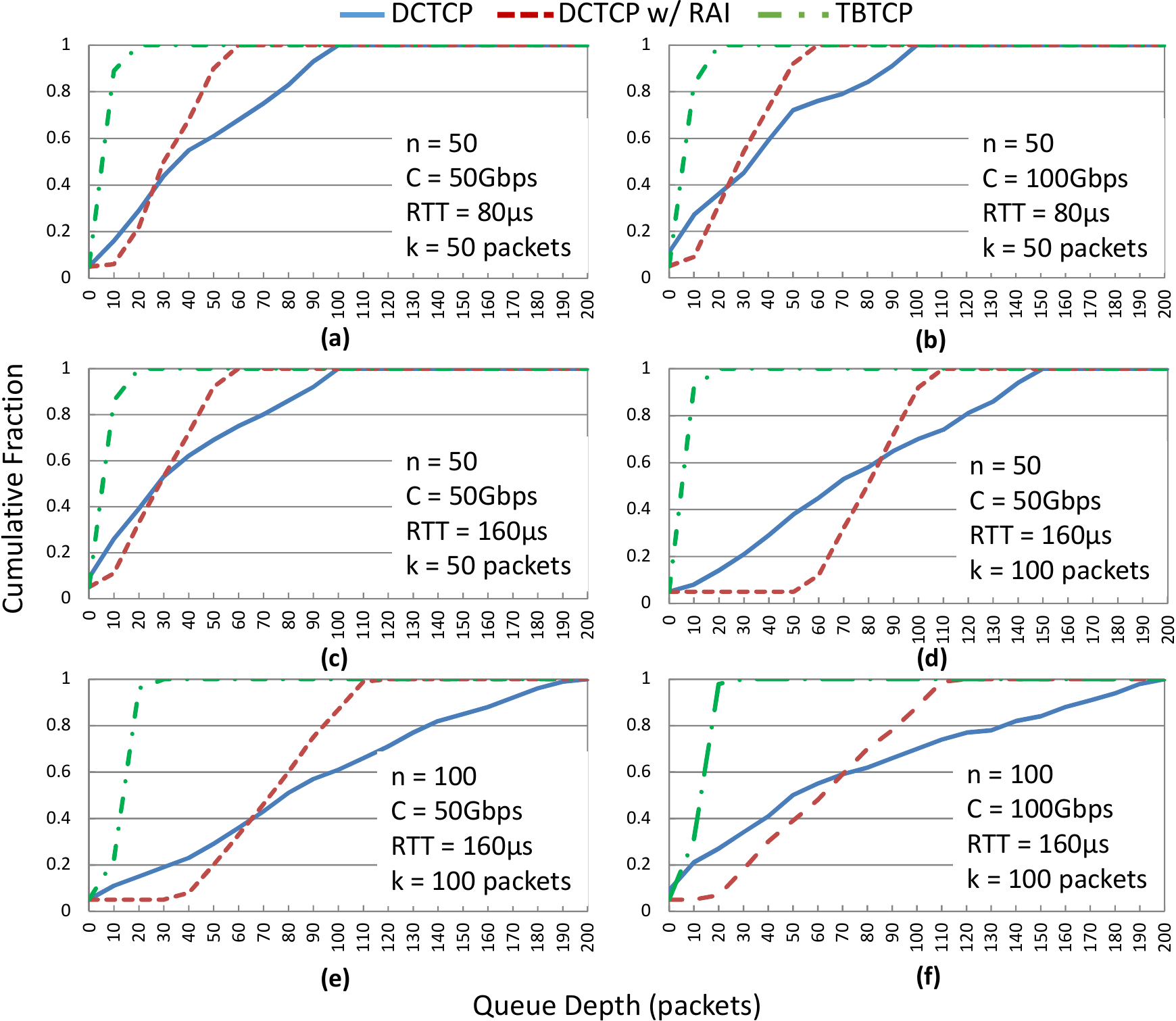}
\caption{Buffer Size CDF for Different Configurations}
\label{fig_buffer}
\end{figure}

The simulations conform to DCTCP's analysis on buffer size requirement (i.e., at least $n+k$ in order to avoid packet drop). Furthermore, the queue depth of DCTCP oscillates between 0 and $n+k$ with nearly uniform distribution, which implies a volatile queuing delay as well as RTT. On the other hand, when DCTCP is implemented with RAI, the buffer size requirement is no longer related to $n$ but just slightly higher than $k$. This confirms that RAI helps decouple the buffer size requirements from the number of concurrent flows, which is critical to implement a small-buffer switch. 

Regardless of various configurations, TBTCP constantly requires a small buffer with about 20 packets. The small buffer implies small jitter and latency for flow transport. We can also see that when the other conditions are constant, the different bottleneck bandwidth and the flow RTT have little effect on the queue depth for all three algorithms.  

RAI effectively narrows the queue oscillating range and reduces the jitter of queuing delay. Since the minimum queue depth tends to be greater than 0, RAI also helps lower the value of $k$ for DCTCP without leading to buffer underflow and throughput loss. We show the results of another simulation in Figure~\ref{fig_reducek}. In this case, we assume that 100 flows share a 40Gbps link and the RTT is 160us. The theoretical $k$ value for DCTCP is 76 packets, whereas the simulation shows that $k$ must be increased to 90 to avoid buffer underflow. In contrast, if RAI is enabled, $k$ can be reduced to 38 packets without losing any link throughput. 

\begin{figure}[!tbh]
\centering
\includegraphics[width=.5\columnwidth]{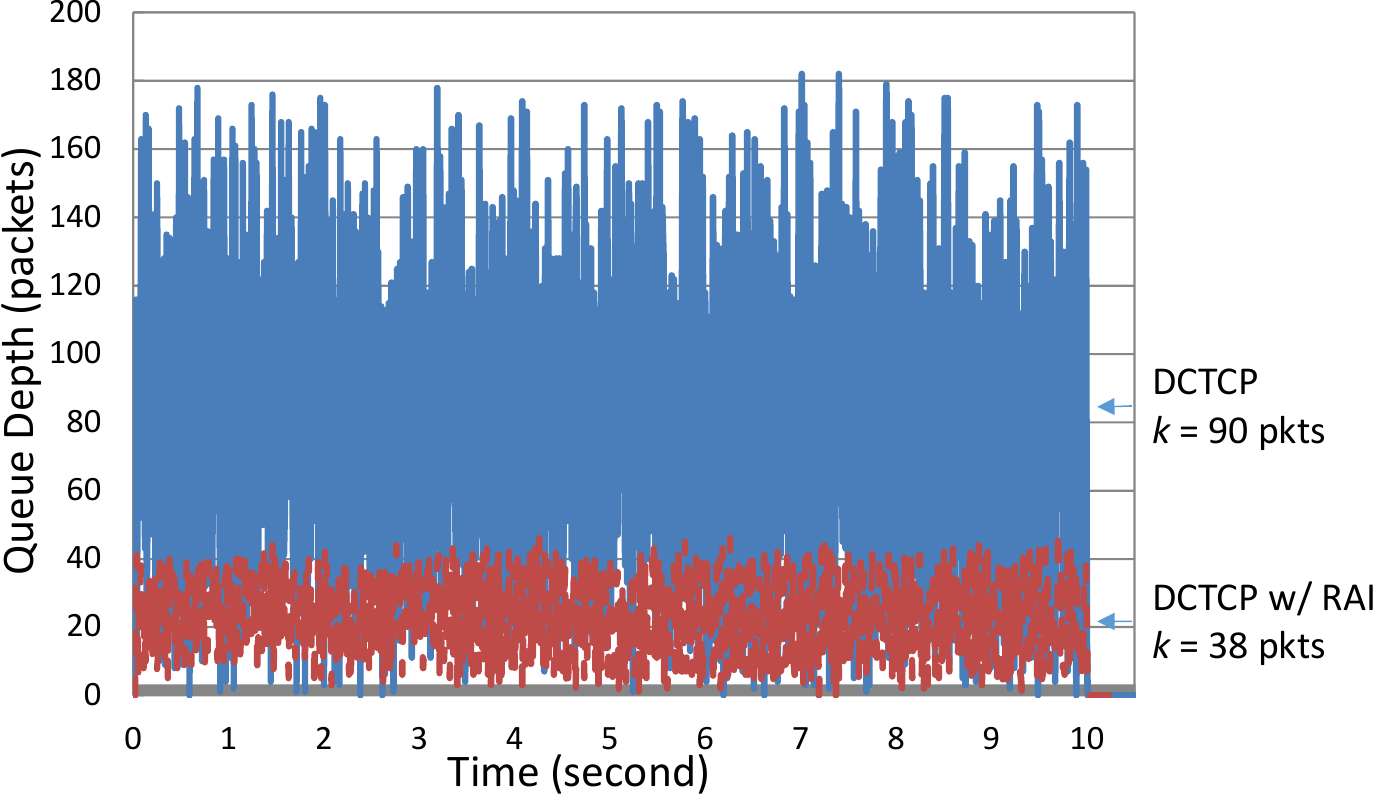}
\caption{DCTCP with RAI Can Reduces $k$}
\label{fig_reducek}
\end{figure}

\subsection{Fairness Between TBTCP Flows}

To test TBTCP's fairness, we start a new flow every 10 seconds with each flow lasting 40 seconds. Four flows are started in total. We repeat the test on a 1Gbps bottleneck link and a 10Gbps bottleneck link. The  \texttt{CWND} of each flow as a function of time are shown in Figure~\ref{fig_fair}.

\begin{figure}[!tbh]
\centering
\includegraphics[width=.8\columnwidth]{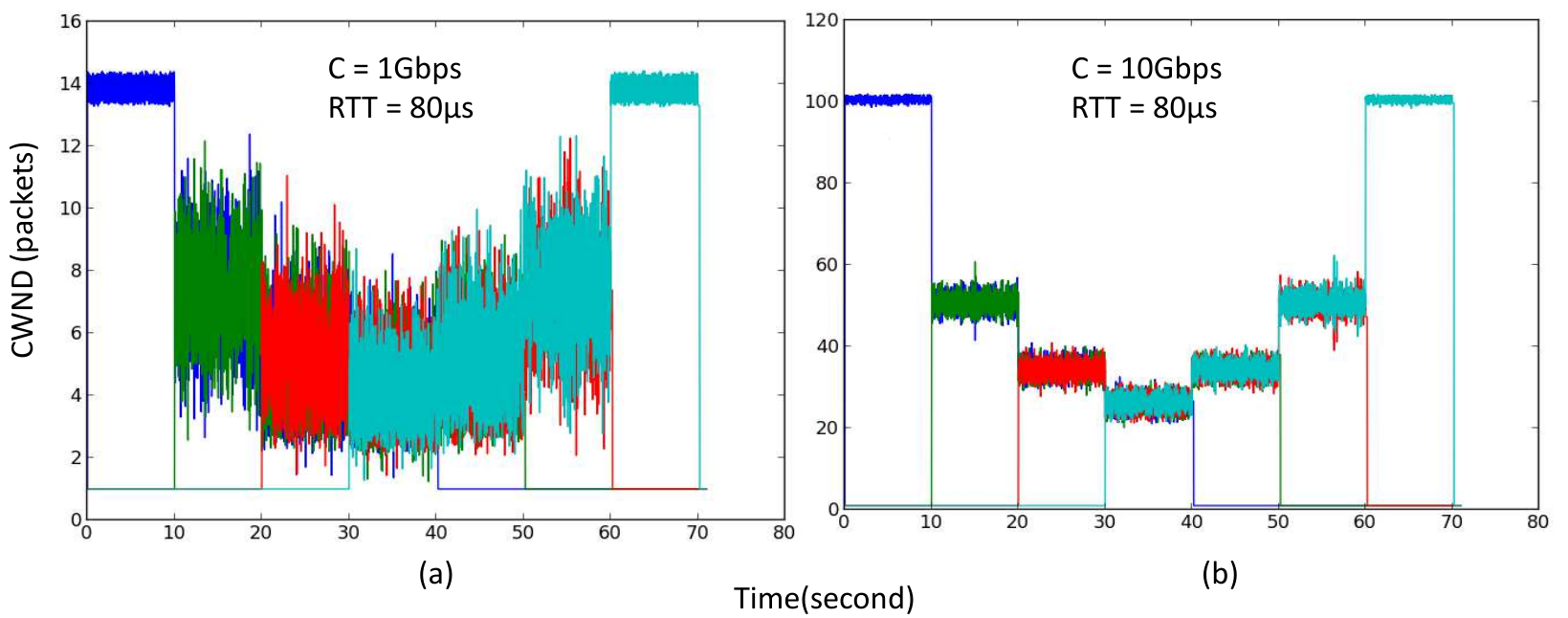}
\caption{TBTCP Fairness}
\label{fig_fair}
\end{figure}

In both cases, the bottleneck link reaches 100\% throughput. The \texttt{CWND} curve shows that each flow gains its fair share of bandwidth. When there is more than one concurrent flow, the oscillation range of the \texttt{CWND} value is larger than the case of a single flow. However, the range becomes insensitive to the number of flows past one. This is due to the complex interaction between flows: more flows increase the queue oscillation, so the ECN marking probability is higher and in turn the \texttt{CWND} adjustments are more frequent.   

\subsection{Convergence Speed}

Since RAI recovers the \texttt{CWND} at a slower pace than the normal TCP and DCTCP, the flow convergence time for gaining fair share of bandwidth may seem concerning. Figure~\ref{fig_converge} shows the simulation results for TBTCP, DCTCP, and DCTCP with RAI enabled. Each case has two flows on a 10Gbps with 80us RTT: one flow starts at 0 second and stops after 20 seconds, and the other flow starts at the 10th second and stops after 20 seconds. We examine the \texttt{CWND} transition of both flows at the 10th second and the 20th second. TBTCP takes 60ms to converge at both check points; DCTCP takes 40ms to converge at the first check point and 10ms at the second; DCTCP with RAI takes 250ms to converge at the first check point and 120ms at the second. 

\begin{figure}[!thb]
\centering
\includegraphics[width=.7\columnwidth]{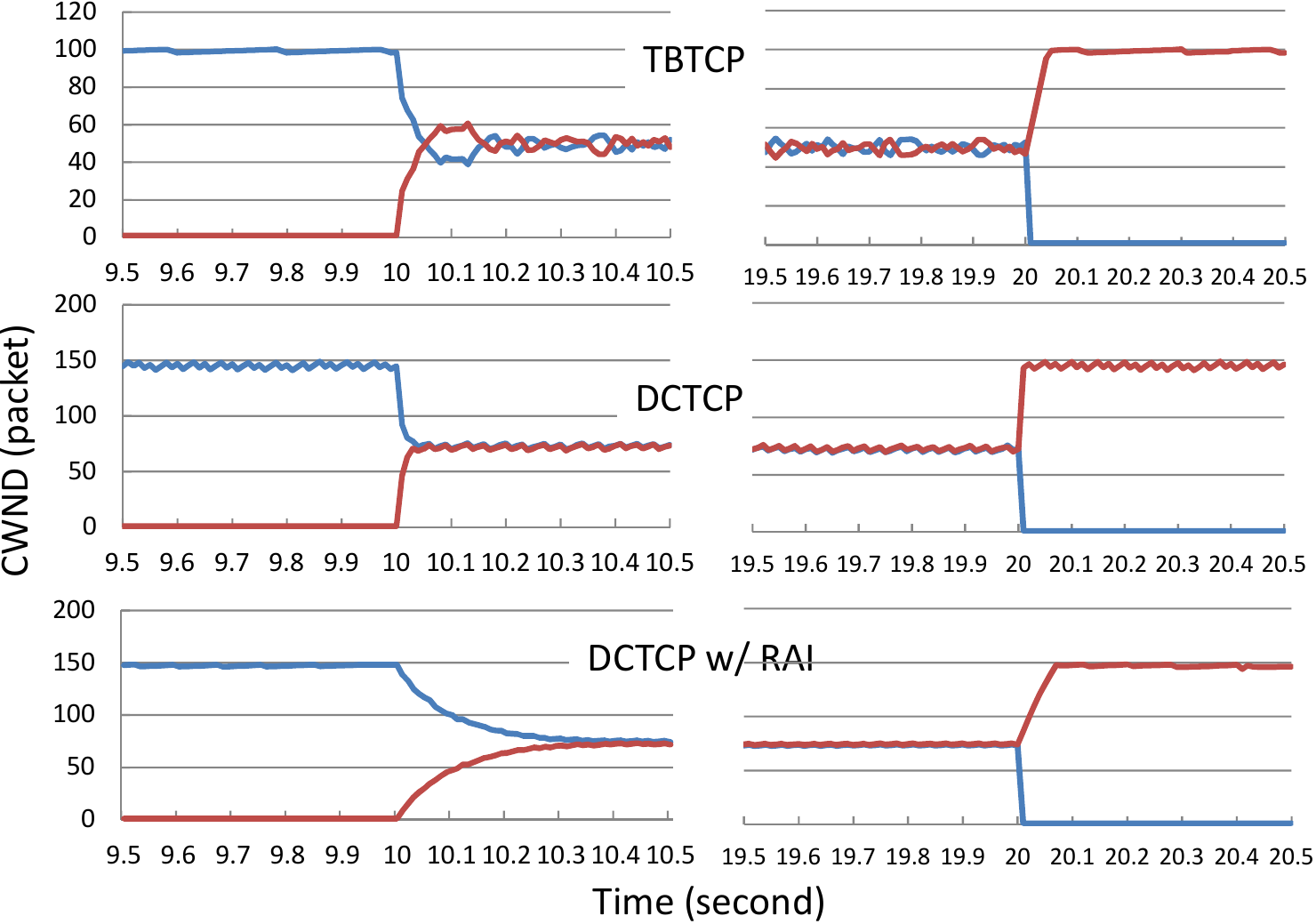}
\caption{Convergence Speed Comparison}
\label{fig_converge}
\end{figure}

DCTCP with RAI is the worst in terms of convergence time, but TBTCP is just slightly worse than DCTCP. When more concurrent flows are involved, the convergence time difference will be further reduced for the following reasons: First, the slow-start phase of a flow still uses the normal additive increase mechanism --- RAI only taking effect at the congestion-avoidance phase --- so a new flow can quickly converge to its fair share of bandwidth at the same speed that some existing flows give up their bandwidth share. In TBTCP, flows reduce speed based on QCD. The aggregated reduction is comparable  to the window reduction in DCTCP. Second, although RAI slows down the window recovery, QCD amortizes the window reduction over multiple flows, with each flow only reducing its \texttt{CWND} by one each time during congestion control. The slow recovery is accompanied by a moderate reduction. 

\noindent\textbf{Convergence Speed on Different $\beta$:} To examine how $\beta$ can affect the convergence speed, we repeat the above experiment for TBTCP with 
different $\beta$ values. As shown in Figure~\ref{fig_betaconverge}, a larger $\beta$ value can reduce the bandwidth convergence time. Clearly, we can use a larger $\beta$ value as a trade-off to accelerate the convergence speed at the cost of a larger buffer size.

We also note that when $\beta$ is equal to or greater than 0.5, TBTCP loses its micro adjustment efficacy. Experiments show that new flow can never gain its fair share of bandwidth. This is because a large $\beta$ value leads to a large stable queue depth and marking probability, which in turn restrains the new flow's window increase potential.   

\begin{figure}[!thb]
\centering
\includegraphics[width=.5\columnwidth]{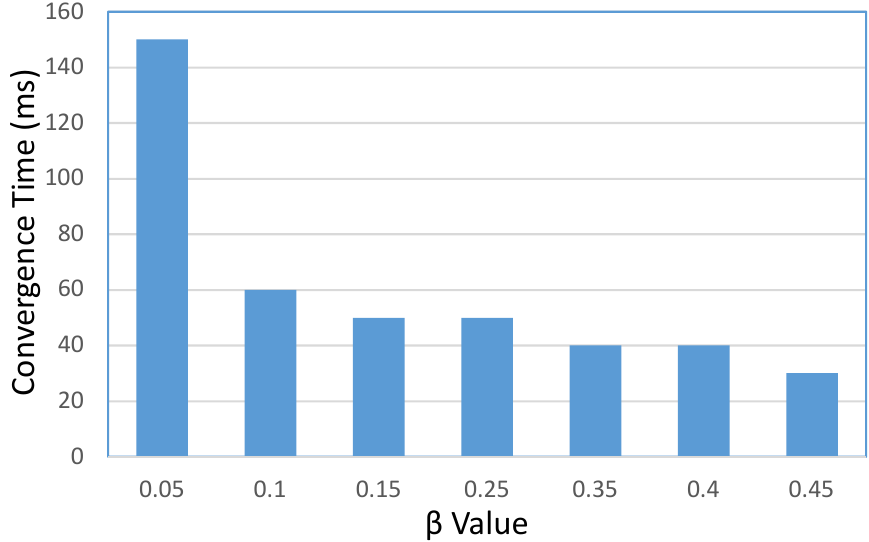}
\caption{Convergence Speed for Different $\beta$}
\label{fig_betaconverge}
\end{figure}

\subsection{RTT Fairness}

To test the flow behavior of TBTCP in contrast to DCTCP when experiencing different RTTs, we assume four long-lived flows from different sources compete for a 40Gbps bottleneck link. Each flow's maximum bandwidth is 20Gbps. The RTT of these flows is  configured to be 20us, 60us, 100us, and 140us, respectively (assuming the bottleneck queue is empty).  For DCTCP, we set $k$ to 60KB based on the theoretical formula $k = \texttt{RTT}*C/7$, in which RTT is taken the average value of 80us. For TBTCP, the ECN marking probability is $Q/(\texttt{BDP}+Q)$, in which BDP is calculated with the RTT setting to the minimum, average, and maximum values, respectively (i.e., 50us, 80us, and 110us). The experiments last 10 seconds each.

\begin{figure*}[!tb]
\centering
\includegraphics[width=1.7\columnwidth]{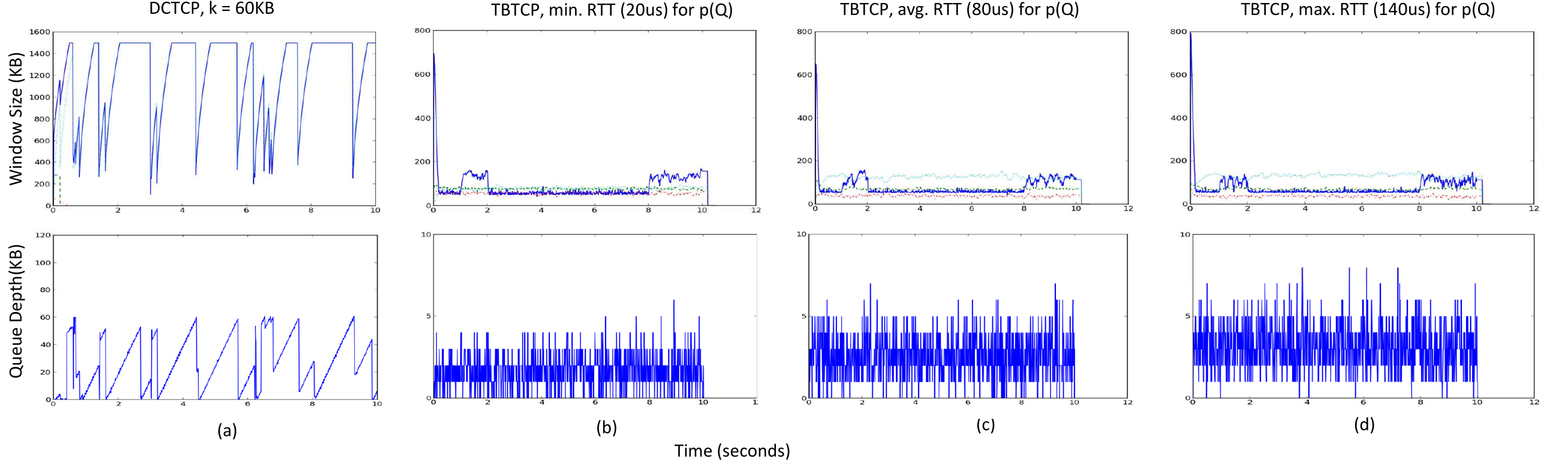}
\caption{RTT Fairness Comparison}
\label{fig_rttfair}
\end{figure*}

As shown in Figure~\ref{fig_rttfair}(a),  the RTT fairness of DCTCP in this case is poor. The Jain's fairness index is only 0.51. The two flows with smaller RTT seize almost all the bandwidth, and the other two flows with longer RTT are permanently starved. Meanwhile, the bottleneck queue depth also fluctuates violently. When we use smaller or larger $k$ values, the fairness starts getting better, at the cost of bandwidth loss or memory consumption.  

On the other hand, TBTCP presents a much better RTT fairness. The RTT value used to calculate the ECN marking probability $p(Q)$ can affect the RTT fairness. In the above three configurations, their Jain's fairness index are 0.98, 0.87, and 0.85, respectively. It appears that when the smallest RTT value is used for $p(Q)$, the fairness is the best and the average bottleneck queue depth is the shallowest. However, in this case, we can occasionally observe the queue underflow hurt the throughput. As a trade-off, we consider the average RTT a good balance between fairness and queuing performance. %In this case, the bottleneck queue introduces less than 1us extra latency to the flow RTT.  

The simulation results also show a significant mismatch of the total window size between DCTCP and TBTCP. This is due to the buffer bloat introduced by $k$ in DCTCP (recall that $W=\texttt{BDP}+Q$). However, DCTCP's large window does not translate into higher throughput. TBTCP achieves the full throughput with the near optimal total window size and the minimum latency.

Note that the RTT imbalance in DCN is a real issue, given the queuing delay contribution has been significantly reduced. Assume the processing and transport delay of one switch is 15us. In a 3-tier DCN, the topology-induced RTT can range from the minimum value of 30us to the maximum value of 150us. TBTCP is preferred to DCTCP for a better balanced performance in such networks. 

\subsection{Analysis}

\noindent\textbf{RTT Fairness.} Flows may experience different RTT due to the path length variance. TCP flow's throughput is observed to be inversely proportional to RTT because flows with smaller RTT increase their window size faster. DCTCP also exhibits a negative bias against flows with longer RTT~\cite{dctcpana}. This is a serious concern for DCN where the longest RTT can be multiple times of the shortest RTT.

However, the RTT fairness is not a big issue for TBTCP. It has been shown that RED can help improve RTT fairness~\cite{rttfair}. On the other hand, due to RAI, the flows with smaller RTT increase their window sizes just slightly faster than the other flows. They will also receive ECNs faster so their window size reduction is faster. The net effect is that the RTT unfairness is well controlled. 

\noindent\textbf{Small Flow Fairness.}
Since TBTCP's window recovery speed is slow, the small flows, if their windows are reduced, would suffer more than large flows. This is mitigated by several factors: 1) Small flows have smaller probability to be chosen for window reduction; 2) RAI only takes effect after the slow start stage; 3) QCD only takes effect when the flow's window is greater than 2; and 4) the minimized RTT in TBTCP can compensate the window size loss (i.e., halving the RTT is equivalent to doubling the flow throughput). 

Summing up all these factors, TBTCP shows a surprisingly better small-flow FCT performance than DCTCP.  

\noindent\textbf{Mixed Flow Fairness.} It should be obvious that if TBTCP flows coexist with other type of flows such as TCP and DCTCP, the performance of TBTCP flows can be negatively affected. This means TBTCP can only be exclusively deployed as the only congestion control algorithm in a data center.  

\noindent\textbf{Incast.} Incast~\cite{incast} is not much of  a concern for TBTCP. Incast happens when a large number small flows compete for the same queue. However, the incast bursts build up in a few RTTs during which TBTCP can generate enough ECNs to curb enough number of flows in time.

\section{Implementation}~\label{imp}

TBTCP requires a few simple modifications to DCTCP's congestion control algorithm. It also takes advantages of the RED-based ECN feature, which is generally available in commodity switches. 

\subsection{Host TCP Stack Modification}

Implementing QCD and RAI in existing TCP protocol code in host OS is straightforward. The modification is made based on the DCTCP implementation in linux-source-4.4.0. Only 10 lines of code were added or modified for the sender, and 30 for the receiver. The modifications are distributed in four functions: \emph{tcp\_cong\_avoid\_ai()}, \emph{dctcp\_ssthresh()}, \emph{tcp\_enter\_recovery()}, and \emph{dctcp\_cwnd\_event()}.  
 
\subsection{Switch Configuration}

\noindent\textbf{Shaping the ECN Probability Curve.} Figure~\ref{fig_red} shows the ideal ECN marking probability curve for QCD. However, in real implementation, we need to consider two issues.

First, many factors may increase TCP burstiness~\cite{dctcp}. For example, the NIC's Large Send Offload (LSO) function would send a big chunk of data with multiple MTU-sized packets in a burst. Since such bursts are transient and do not reflect an overall sending window boost, it is best to avoid triggering QCD. The measure we take is similar to that in DCTCP. We shift the ideal curve to the right and make the curve start from a threshold $l$. In our current implementation, we set $l$ to 138KB. This is because our NIC enables a 128KB segment size, which implies a burst size of 86 packets. Likewise, the DCTCP's $k$ should be adjusted to be at least 120 packets or 180KB to avoid triggering excessive ECNs. When the queue depth is below $l$, no packet is ECN marked. The new ECN marking probability derived from Equation~\ref{eq_pq} is therefore: 

\begin{equation}\label{eq_npq}
p(Q) = \frac{Q-l}{\texttt{BDP}-l+Q} 
\end{equation}

Second, as queue depth increases, the ECN marking probability will eventually approach 1. When the probability is high enough, there is a possibility that the sending window is overly reduced to cause buffer underflow. To counter this, the implementation of TCP and DCTCP allows the flow's sending window to be reduced by a maximum of 50\% in the case of packet drop or ECN. Similarly, our implementation limits the ECN marking probability to a maximum of 0.5. This means that no more than 50\% of packets can be marked in a RTT, which is equivalent to reducing each flow's \texttt{cwnd} by a maximum of 50\% in a RTT when the network is heavily congested. Since $p(Q)$ is a monotone increasing function, when $Q \ge \texttt{BDP}+l$, the ECN marking probability $p(Q)$ is saturated at $0.5$.

Hence, the corrected ECN marking probability is:

%\begin{equation}\label{eq_npqc}
\[
p(Q) = \left\{ \begin{array}{ll}
				0                                          & Q \leq l \\
				\displaystyle \frac{Q-l}{\texttt{BDP}-l+Q} & l < Q \leq \texttt{BDQ} +l \\
				0.5                                        & Q > \texttt{BDQ}+l
				\end{array}
		\right.
\] 
%\end{equation}

\noindent\textbf{Fitting the Curve to Switch ECN Configuration.} For convenience and simplicity, the RED function in commercial switch chips is actually implemented as a step function that approximates the theoretical linear function~\cite{bcmguide}. Specifically, the range between $\texttt{t\_min}$ and $\texttt{t\_max}$ is divided into eight equal-sized intervals. Hence, the step size $s = (\texttt{t\_max}-\texttt{t\_min})/8$ and the $i$-th step interval $A_i = (\texttt{t\_min}+i*s, \texttt{t\_min}+(i+1)*s]$. At interval $i$, the step value $\alpha_i = (i+0.5)/12.5\%$. The actual RED function is:

%\begin{equation}\label{eq_red_step}
\[
f(Q) = \left\{ \begin{array}{ll}
					0                               & Q \leq \texttt{t\_min} \\
					\displaystyle \sum_{i=0}^{7} \alpha_i \chi_{A_i}(Q)   & \texttt{t\_min} < Q \leq \texttt{t\_max} \\
					1                               & Q > \texttt{t\_max}
				\end{array}
		\right. 
\]	
%\end{equation}	   

In the above equation, $\chi_{A}$ is the indicator function of interval $A$. Our goal is to find the most fitting $f(Q)$ for $p(Q)$ and set the parameters $\texttt{t\_min}$, $\texttt{t\_max}$, and $\texttt{P\_max}$ accordingly. For simplicity, we set $l = \texttt{t\_min}$. We also know that $\texttt{t\_max} \gg \texttt{BDP}+l$. Once the queue depth reaches $\texttt{t\_max}$, we follow the RED rule to mark all packets in order to avoid buffer overflow. 

With the above considerations, the fitting problem can be translated into an optimization one:

\[
\displaystyle \min_{\texttt{P\_max}} \texttt{err} = \int_{\texttt{t\_min}}^{\texttt{t\_max}} (p(Q)-f(Q))^2\,dQ
\]

We depict $p(Q)$ and $f(Q)$ in Figure~\ref{fig_fitting}. The optimization goal is to find the optimal value of $\texttt{P\_max}$ that can minimize the square error between the two curves. 

\begin{figure}[!t]
\centering
\includegraphics[width=.6\columnwidth]{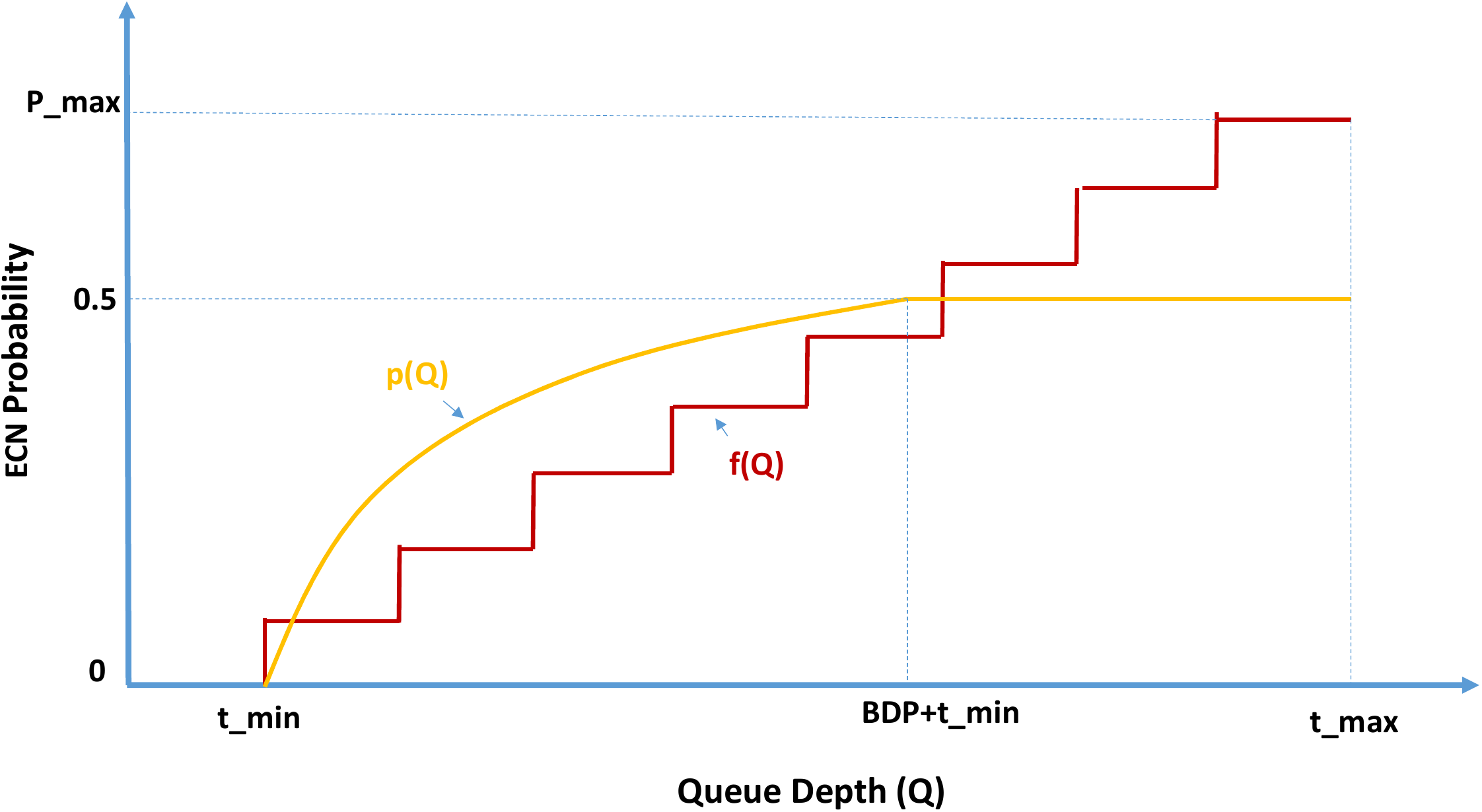}
\caption{Fitting the Switch ECN Configuration}
\label{fig_fitting}
\end{figure}

Numerical analysis shows that when $\texttt{t\_min}$, $\texttt{t\_max}$, and \texttt{BDP} are set to 138KB, 550KB and 180KB, respectively, the proper $\texttt{P\_max}$ is 0.7 and the resulting square error is 6.66.

\noindent\textbf{Scaling the Probability Curve.} The above fitting result is not satisfactory. If we use this configuration directly, the results will significantly diverge from the ideal. However, if we scale the curve of $p(Q)$ by a factor of $1/r$ (i.e., the probability at each point of $Q$ becomes $p(Q)/r$), the curve appears more flat and linear. Thus, we can find the best $\texttt{P\_max}$ for $f(Q)$ to fit $p(Q)$ with a reduced fitting error. 

Table~\ref{tab_fit} shows the optimal fitting results for different values of $r$, with the other parameters being the same as above.

{\tiny
\begin{table}[!ht]
\centering
\caption{Scaling Factor and Fitting Error}\label{tab_fit} 
\begin{tabular}{|c|l|l|} \hline
 $r$     &  $\texttt{P\_max}$  & \texttt{err}   \\ \hline\hline
1   &    0.7        &    6.66               \\ \hline
2   &     0.35     &     1.67                 \\ \hline
3   &     0.25     &    0.76                    \\ \hline
4   &      0.2     &      0.49                  \\ \hline
\end{tabular}
\end{table}
}

The new curve reduces the ECN marking probability by a factor of $r$, which is equivalent to reducing the number of ECN-marked packets by a factor of $r$. To compensate for this, a victim flow needs to reduce its congestion window by $r$ instead of $1$ whenever an ECN is received. Similar to DCTCP, TBTCP does not allow the window to be reduced by more than 50\% for each ECN. 
So, the actual QCD window adjustment for a flow is: $\texttt{CWND} \leftarrow max\{\texttt{CWND}-r,  \texttt{CWND}/2\}$.

Although a larger $r$ can help reduce the fitting error, the larger window size reduction causes greater performance impact on victim flows. It is necessary to evaluate the different configurations of $r$ and come up with the best tradeoff.

\section{Test Results}~\label{test}

To test the performance of the TBTCP implementation, we set up a physical network which includes a few Ethernet switches and servers. Each switch has 24x 10GE ports and a Broadcom Trident switch chip. The chip has 9MB shared buffer in total and each port can use up to 1MB. Each ESX server has two CPUs with each CPU containing six 2.4GHz cores. Each server is equipped with 8x 10GE NIC, so we can start up to eight VMs each with a dedicated NIC. The VM guest OS is Ubuntu 16.04. We use iPerf~\cite{iperf} for performance measurement. In each guest, we  run either an iPerf client or an iPerf server. Each iPerf client can initiate multiple TCP flows towards an iPerf server. 

The test bed configuration is shown in Figure~\ref{fig_testbed}. Multiple clients send multiple TCP flows to a single server. The bottleneck link bandwidth is 10Gbps and the measured end-to-end RTT is 150us on average, so the \texttt{BDP} is about 180KB and the theoretical $k$ value for DCTCP is about 26KB. 

In the tests, the throughput is the sum of the measured throughputs for all the flows at the iPerf server. Due to the link layer overhead, the theoretical maximum throughput can only reach 95\% of the bottleneck link bandwidth.  Meanwhile, the iPerf software overhead also causes some throughput loss, especially when the number of flows is large. Since the real-time buffer occupancy in the switches is invisible, the amount of throughput loss caused by buffer underflow cannot be determined. In the following tests, we use DCTCP as the benchmark and compare it with TBTCP. 

\begin{figure}[!th]
\centering
\includegraphics[width=.5\columnwidth]{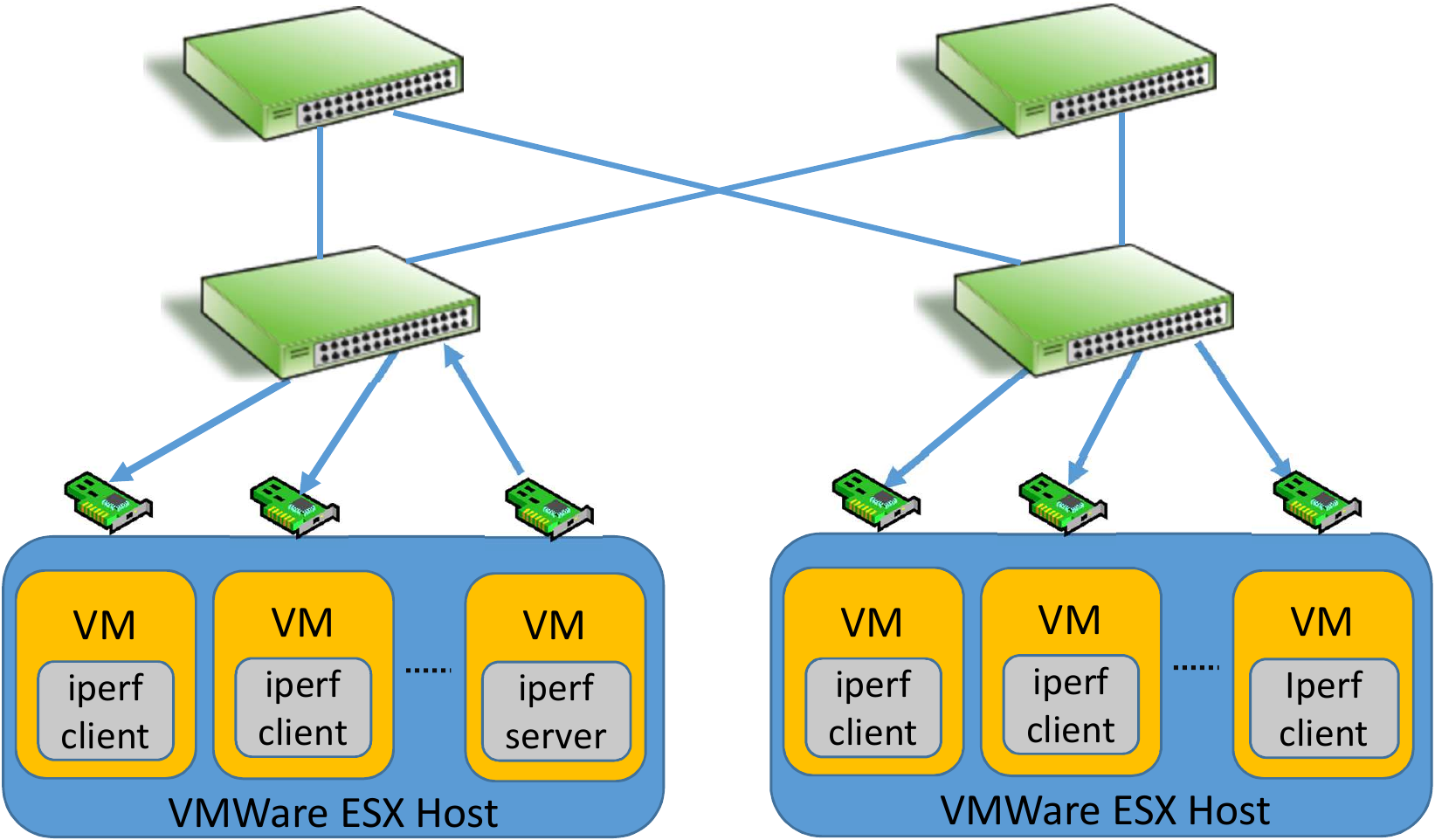}
\caption{Test Bed Topology}
\label{fig_testbed}
\end{figure}

\noindent\textbf{Choose the Curve Scaling Factor:} The fitting error is inversely related with the scaling factor $r$. We test the achievable throughput when 100 flows compete for the bottleneck link with different scaling factors and find that the throughput is 7.7Gbps when $r=1$. A larger $r$ leads to higher throughput. For example, when $r=4$, the throughput becomes 8.8Gbps. 
However, a larger $r$ also increases the QCD granularity, which violates our fine adjustment principle. Indeed, as shown in Figure~\ref{fig_scaling}, although the maximum queue depth does not increase by much when $n$ becomes larger, the queue depth oscillation becomes more intense.  As a tradeoff, we choose to use $r=4$ and $\texttt{P\_max} = 0.2$.

\begin{figure}[!tbh]
\centering
\includegraphics[width=.5\columnwidth]{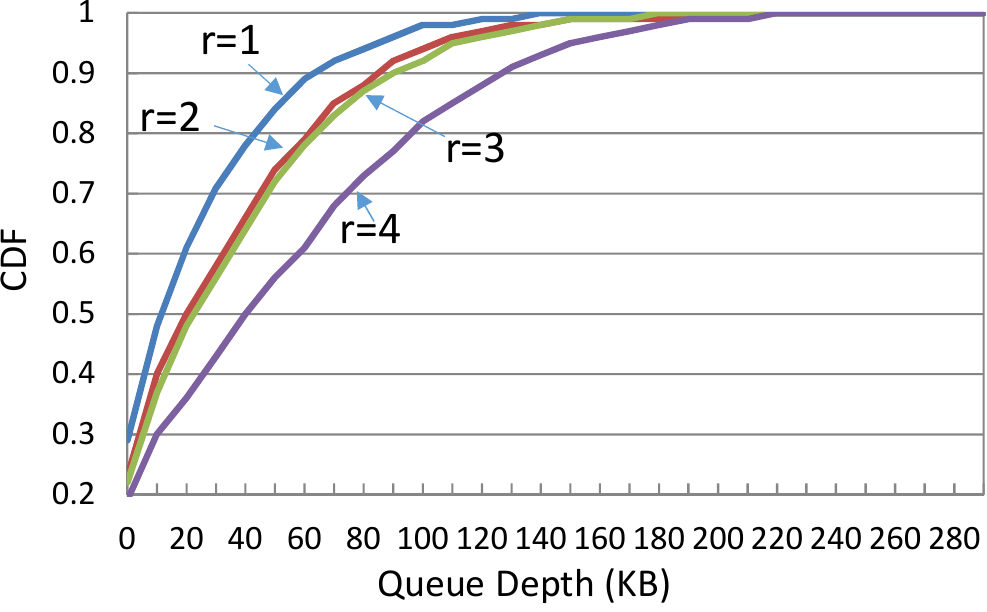}
\caption{Queue Depth on Scaling Factors}
\label{fig_scaling}
\end{figure}

\noindent\textbf{Bandwidth Utilization and Queue Depth:}  Figure~\ref{fig_dctcptest} shows that,  due to buffer underflow, the theoretical $k$ value for DCTCP cannot guarantee full bottleneck bandwidth utilization when the number of flows is large. Increasing $k$ helps improve the bandwidth utilization, but when the flow number is large, the improvement diminishes. This is because a larger $k$ consumes more buffer and will eventually cause buffer overflow (see Figure~\ref{fig_dcvtb1}(b)).   

\begin{figure}[!tbh] 
\centering
\includegraphics[width=.5\columnwidth]{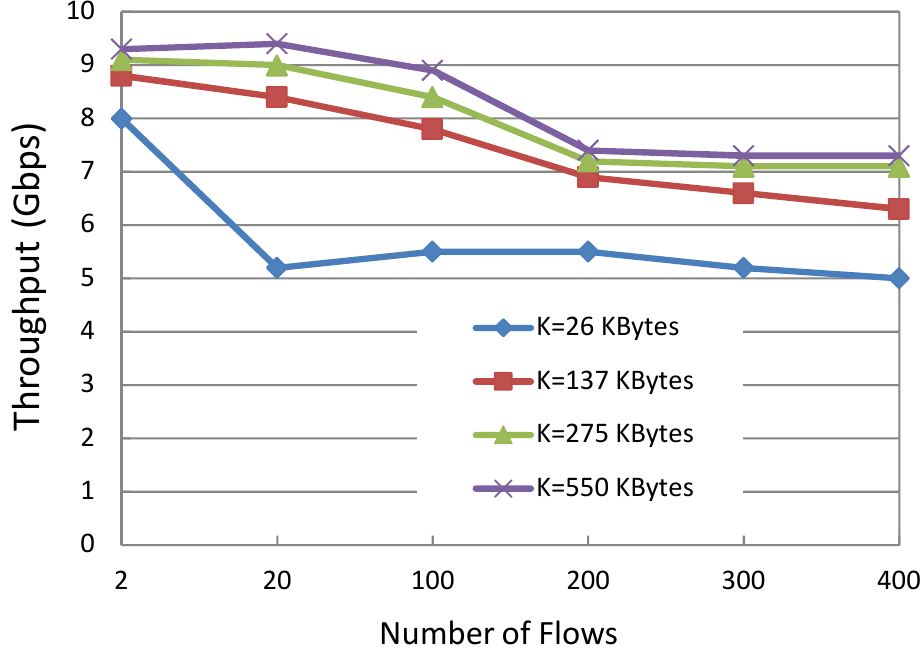}
\caption{DCTCP Throughput on $k$ and Flow Count}
\label{fig_dctcptest}
\end{figure}

When the number of flows is large, TBTCP can improve the bottleneck link throughput by up to 15\% compared with DCTCP. At same time, TBTCP reduces the maximum queue depth by up to 80\%, as shown in Figure~\ref{fig_dcvtb1}.

\begin{figure}[!tbh] 
\centering
\includegraphics[width=.75\columnwidth]{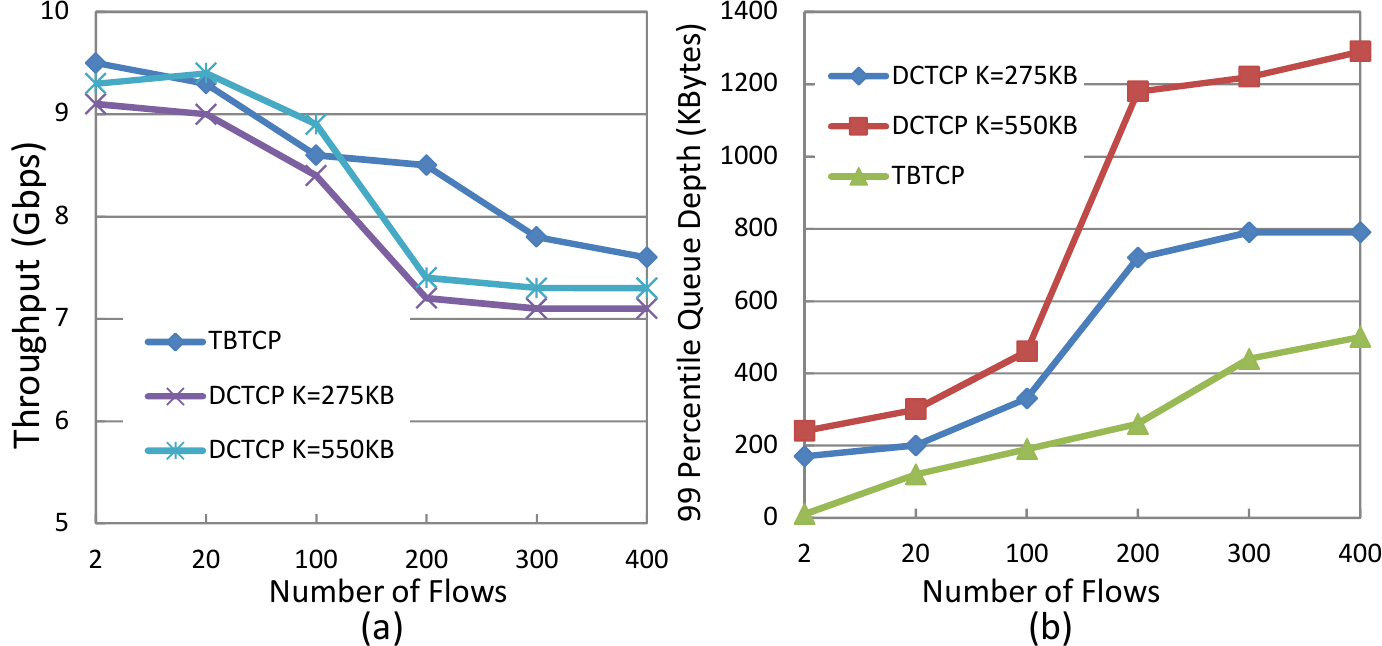}
\caption{TBTCP vs. DCTCP on BW and Queue}
\label{fig_dcvtb1}
\end{figure}

In the following tests, we set $k=$ 275KB for DCTCP as a tradeoff of throughput and queue depth.

\noindent\textbf{Flow Completion Time (FCT):} To emulate the real network traffic, we configure 80\% of the current flows to be shorter than 100KB and the remaining 20\% to be longer than 100KB. The flow size ranges from 4KB to 5MB. We run the test until we get 50 FCT values for each flow length. The average FCT values for TBTCP and DCTCP  are shown in Figure~\ref{fig_fct}(a).

\begin{figure}[!tbh] 
\centering
\includegraphics[width=.75\columnwidth]{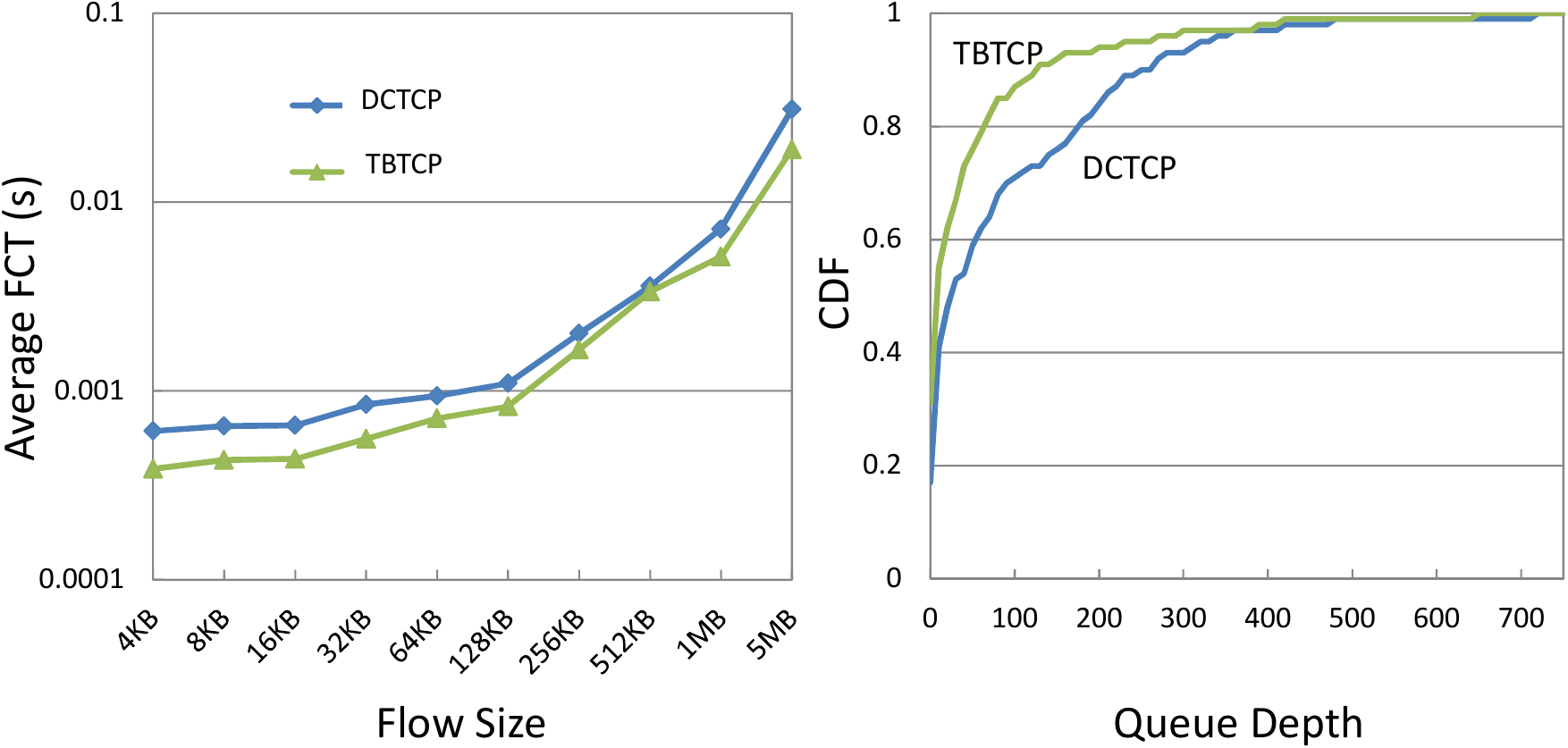}
\caption{TBTCP vs. DCTCP on FCT}
\label{fig_fct}
\end{figure}

The test shows that, compared with DCTCP,  TBTCP improves FCT for all flow sizes. TBTCP reduces the average FCT of short flows ($\leq$512KB) by about 200us (which accounts for 7\% to 37\% improvements) and the average FCT of long flows ($>$1MB) by 2-10ms (which account for 29\% to 39\% improvements). The significant FCT improvement is attributed to TBTCP's shallow and stable queue, as shown in Figure~\ref{fig_fct}(b). This result also confirms that the slower convergence time of TBTCP does not hurt the FCT performance.

\noindent\textbf{Fairness and Convergence Time:} We test bandwidth fairness and convergence time for both TBTCP and DCTCP. We start a new flow every 20 seconds until there are four flows. We then stop a flow every 20 seconds until only one flow is left. The test results are shown in Figure~\ref{fig_dcvtb2}.

\begin{figure}[!bt] 
\centering
\includegraphics[width=.75\columnwidth]{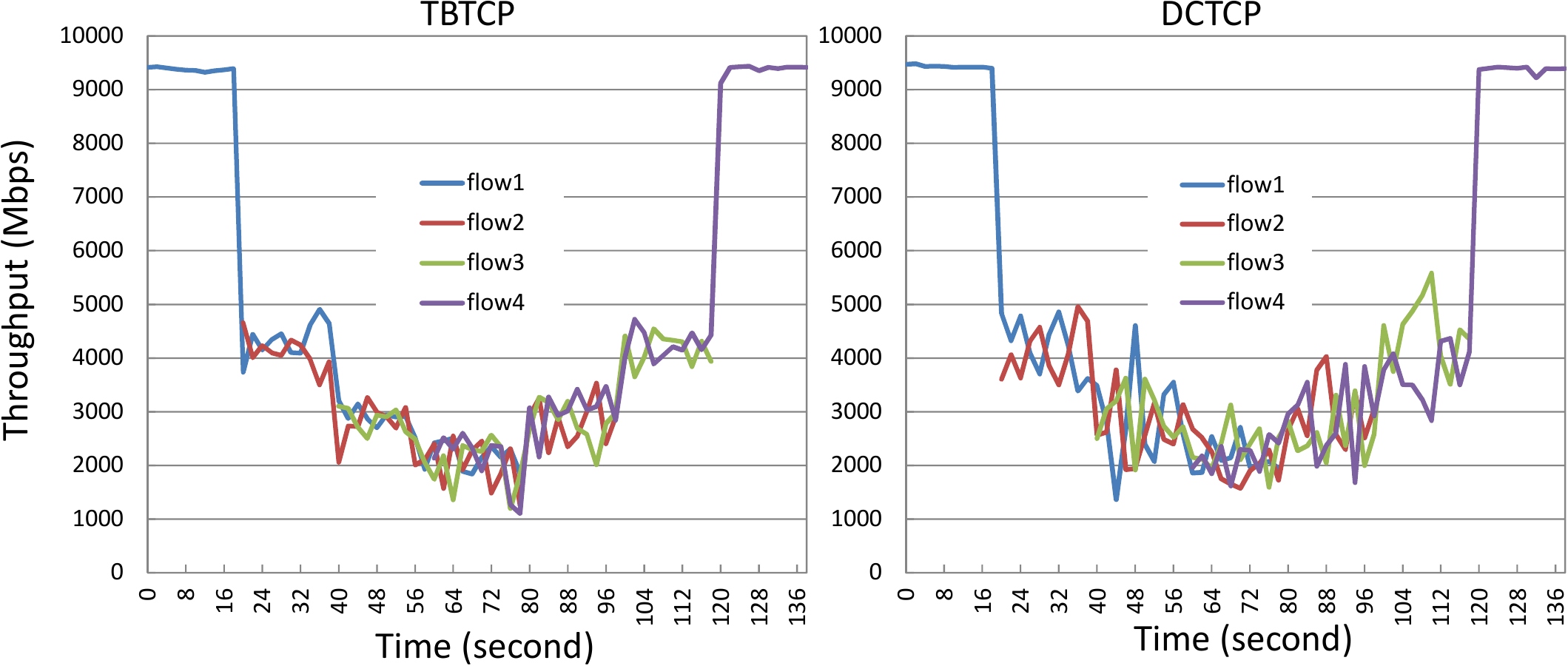}
\caption{TBTCP vs. DCTCP on Fairness}
\label{fig_dcvtb2}
\end{figure}

While both algorithms allow flows to share bandwidth fairly, TBTCP's flow throughput fluctuates less than DCTCP's.  We compare the bandwidth convergence time at each transition point (i.e., when a new flow starts or an old flow stops) and find that TBTCP and DCTCP are comparable when there is more than one flow. The two exceptions are: when the second flow starts, it takes TBTCP 0.3 seconds to converge to its share of bandwidth while it takes DCTCP 0.2 seconds; when the second-to-last flow stops, the last flow takes TBTCP 1.1 seconds to converge to full bandwidth while it takes DCTCP 0.4 seconds. 

Although DCTCP converges faster in general, the convergence time difference becomes negligible as the number of concurrent flows increases. %TBTCP still achieves better FCT, throughput, and latency performance. 

%\noindent\textcolor{red}{\textbf{RTT Fairness:}}

\section{Summary and Discussion}~\label{summary}

%We summarize the key benefits of TBTCP and clarify the rationality of TBTCP through more intuitive explanations.   

TBTCP is a pure host TCP-based congestion control algorithm on a par with DCTCP. We did not compare TBTCP with other more recent data center congestion control algorithms because 
all those algorithms need network device modifications other than simple host OS updates.  
%DCTCP's buffer requirement is coupled with the network BDP and the concurrent flow count. Its buffer occupancy tends to oscillate in a wide range and is slow to converge to the stable level. As results, DCTCP needs large buffer in switches and its performance suffers. 
The design goal of TBTCP is to alleviate the buffer sizing challenge of high-density commodity switches while achieving better performance than DCTCP. 
%We clarify the rationality of TBTCP through more intuitive explanations.

A large buffer leads to poor FCT and strains the hardware; an unattended small buffer suffers both underflow and overflow. Therefore, the ideal strategy is to prevent congestion in the first place with fine control over a small buffer.
QCD duly eliminates the residual packets in buffer and RAI gently explores any available bandwidth. Jointly, QCD and RAI help TBTCP work close to the optimal point.% as shown in Figure~\ref{fig_bdp}. %While DCTCP's queue depth is topped at $k+n$, TBTCP's queue depth is topped at a significantly smaller number, $\beta n$. 
The TBTCP's bottleneck queue depth is not coupled with BDP but merely a small ratio of the flow count, which enables a big buffer size reduction.

%While the stable and shallow queue depth enables a small-buffer switch design, it can also directly translate into superior flow transport performance. 
%TBTCP obviates the need for large buffer.

%Common wisdom stresses the large buffer size in switches because a large buffer can absorb bursts and alleviate congestion. However, a large buffer may not lead to a better FCT performance but to strain the hardware. While a bloated large buffer is harmful, we admit that a lossy small buffer can be more detrimental if unattended. Therefore the best strategy is to prevent bursts and congestion in the first place with fine controlled small buffer. A small buffer allows switches to continue scale for higher throughput and larger port density. 

%Gentle and agile window adjustment is the key to enable the small buffer design. 
%At the first glance, QCD and RAI appear to be slow and less responsive. 
We retrospect the rationality of TBTCP as follows: First, theoretical analysis tells us exactly how much we should adjust the window size to fit the network condition. The RED-based ECN is proactive through ``early'' notifications, which prevents the buildup of queues in advance. Meanwhile, the ``random'' notifications disperse the window adjustments to multiple flows. So, 1) the large flows are likely to reduce their windows more; 2) each notification only reduces the window of an affected flow by one; and 3) the aggregated adjustments exactly cancel the residual packets in buffer. 

Second, the conventional additive window increase is an overreaction in DCN where RTT is small. It causes bandwidth overshoot and queue buildup, which in turn leads to more congestion notifications. The violent queue oscillation demands larger buffer and fails to maintain a predictable performance.  RAI's slower window recovery largely avoids this problem and keeps the queue depth stable. The smaller recovery step is compensated by the facts that the window reduction size is smaller and large flows are more likely to receive a window reduction.

Third, in DCN, FCT is the most important performance indicator for applications. As the link bandwidth is maximized, RTT becomes the most influential factor affecting FCT. Due to the small hop count in DCN, RTT is mainly determined by the queuing delay. TBTCP's small buffer design therefore guarantees better FCT.
%QCD and RAI jointly maintain a small and stable buffer occupancy. 
%Our experiments show that TBTCP presents a much better FCT performance than DCTCP thanks to the small buffer.  

Fourth, 
%when queue is intended to be kept shallow, queuing delay is not longer the dominant factor of RTT. 
The hierarchical topology of DCN makes a flow's RTT value drop into a few distinct bins. For example, in a 3-tier fat tree, flows may have different hop count of 1, 3, or 5. Conceivably, one flow's RTT can be multiple times of another's. The RED approach taken by TBTCP properly addresses the RTT fairness issue by introducing early and biased punishment to flows with smaller RTT when they try to grab more bandwidth. 

At last, QCD and RAI are both simple operations which demand less host computing power than DCTCP and other sophisticated congestion control algorithms. The required OS network protocol stack modification is minuscule.  

Nevertheless, TBTCP does not gain these benefits without any cost. We acknowledge that TBTCP retains some issues of DCTCP. For example, when TBTCP flows coexist with flows running other congestion control algorithms, the conservative use of buffer may negatively affect TBTCP flows.
%(DCTCP cannot compete with TCP fairly as well). 
However, TBTCP, just like DCTCP,  is intended to be deployed in a confined network environments with a single administration entity. It would be not difficult to deploy TBTCP exclusively throughout a data center,  %In this case, these issues are no longer of concern. 
as the success of DCTCP has already proved the feasibility.

We also admit that TBTCP's flow bandwidth convergence speed is inferior to DCTCP's. However, the convergence time is an indirect performance indicator. 
%In spite of TBTCP's slower convergence speed, 
TBTCP outperforms DCTCP is almost every other direct performance indicator, showing that the advantages of TBTCP dominate.

%\vspace{2mm}
\noindent\textbf{Hardware Consideration:} While TBTCP works on commodity hardware, we suggest two possible implementation improvements. First, if we can configure the probability of each step in the chip's RED function, we can fit the ideal marking probability curve without resorting to the curve scaling approach. This way, we can stick to the original QCD design: a flow only reduces its window by one for each ECN. Since the chip has already supported the step function, we speculate the capability is ready and the vendor just needs to open the API to switch users. We will confirm this with the chip vendor and suggest this enhancement if the chip is incapable of such configuration.  

Second, as in the case of DCTCP, we find the burstiness introduced by the LSO feature on NIC to be harmful to TBTCP's performance. A better burst control on NIC with packet pacing will be very helpful. This can be done through a customized NIC with extra packet processing capability. The smart NIC with an FPGA can be a perfect platform~\cite{smartnic}. 

Our future work will pursue these directions to see how close the real implementation can approach the ideal performance of the TBTCP algorithm.       

\section{Related Work}~\label{related}

Many new DCN architectures and algorithms have been published in recent years. Unlike TBTCP, some schemes are network-based, which require new switch hardware to cope with congestion and make better use of network resource (e.g., NDP~\cite{NDP} assumes a specialized switch for packet trimming and queuing). These schemes often make little or no assumption on end-host's transport protocols. For example, CONGA~\cite{conga} actively measures the congestion level of all paths for each flow and dispatches flowlets to less congested paths for load balancing; LetFlow~\cite{letflow} shows that random flowlet-based load balancing performs surprisingly well in reality; RackCC~\cite{rackcc} conducts rack-level congestion control directly on switches. A well-behaved TCP, such as TBTCP, will work with these algorithms to provide even better performance.  

Data center operators have a strong incentive to keep complexity at edge and make use of commodity switches as the network fabric~\cite{fabricsdn}, because low cost and instant deployability are essential for data center services. The most straightforward approach is to deploy a better TCP congestion control algorithm. Unfortunately, TCP variants such as New Reno~\cite{newreno} and CUBIC~\cite{cubic} cannot meet the stringent performance requirements in DCN, partially due to their design focus on the Wide Area Network (WAN) environments. DCTCP~\cite{dctcp} uses ECN rather than packet drop as a congestion signal, and applies gentler window adjustment. The congestion sensing and reaction is much faster, and catastrophic packet drops are mostly avoided. As a result, the FCT performance is improved. Further optimizations (e.g., DPP~\cite{dpp} and PIAS~\cite{pias}) distinguish long and short flows by giving short flows higher forwarding priority at switches so as to imporve flow FCT performance. Similarly, these optimizations can be applied on TBTCP to boost its performance.   

Instead of adjusting the sending window based on congestion signals, some end-host congestion control algorithms determine the flow sending rates based on the RTT measurements. For example, BBR~\cite{bbr} uses RTT measurements to estimate the network BDP and pace packets at the highest possible sending rate. TIMELY~\cite{timely} applies a similar idea but bypasses OS-kernel and offloads the RTT measurement and traffic pacing to NIC. QUIC~\cite{quic} implements new flow control algorithms in UDP to surpass TCP's performance for particular applications such as HTTP. Other new transport algorithms have been proposed to replace TCP completely (e.g, PCC~\cite{pcc}, pHost~\cite{phost}, ExpressPass~\cite{expresspass}, and HOMA~\cite{homa}). 

More flow related information is useful for better transport performance. In order to gain better performance than TCP, some algorithms rely on application layer information, such as flow deadline or priority, for flow scheduling (e.g., D$^2$TCP~\cite{d2tcp}, D$^3$~\cite{d3}, PDQ~\cite{pdq}, and pFabric~\cite{pfabric}). In addition to the availability assumption of the application layer information, some of these algorithms also need switch arbitration (e.g., D$^3$~\cite{d3} and PDQ~\cite{pdq}) for flow rate allocation. 

Due to the dominant scale of TCP deployment and applications relying on TCP, it may be challenging to deploy such algorithms in reality. However, there is a noticeable exception. For High Performance Computing (HPC), Artificial Intelligence (AI), and storage disaggregation applications in data centers, Remote Direct Memory Access (RDMA), due to its low latency and low CPU overhead, is gaining its momentum by providing the full stack networking solution~\cite{dcqcn}. For a low system cost, it is desirable to avoid proprietary hardware and apply commodity Ethernet technologies to the DCN fabric. RDMA over Converged Ethernet (RoCE) has been an active research area. Similar to DCTCP, DCQCN uses ECN as signal for flow rate control~\cite{dcqcn}. Since RDMA requires a lossless network fabric, the link based flow control, PFC~\cite{pfc}, is used to prevent packet drop due to buffer overflow in networks. The derived issues such as deadlock and head-of-line blocking are addressed in~\cite{rocescale, dcdeadlock}. TBTCP can be used as an orthogonal optimization to mitigate these issues. PFC is less likely to be triggered when using TBTCP, since TBTCP applies more graceful window adjustments and maintains low and stable buffer occupancy. 
     
Some other algorithms combine in-network scheduling/load balancing and end-host rate adjustment together to achieve better performance. PASE~\cite{pase} proposes a distributed framework while Fastpass~\cite{fastpass} takes a centralized approach. In either case, TBTCP can be used for end-host rate control, which is complementary to in-network optimizations.

\section{Conclusion}

In spite of the prosperity of DCN research, TCP remains the most popular data center transport protocol and it still has room to improve.
Small and stable queue occupancy in switches is essential for flow performance and switch efficiency. %A few seemingly controversial modifications to DCTCP turn out to be able to help achieve better transport performance and relief buffer stress in switches. 
The core design principle of TBTCP is to enable finer control of TCP congestion window, which helps avoid drastic queue oscillation under dynamic traffic conditions. 
%TBTCP ensures a low and stable switch buffer requirement by using two simple techniques, QCD and RAI. 
TBTCP is instantly deployable. It outperforms DCTCP in jitter, latency, throughput, FCT, and RTT fairness. Moreover, TBTCP's small buffer requirement allows data center switches to continue scaling for higher throughput, port density, and flexibility.  We believe TBTCP is a capable contender of DCTCP in data centers. 

In our future work, in addition to exploring the hardware improvements, we will study TBTCP's applicability in wide area networks, where router buffer is also facing the scalability issue~\cite{buffersizing}. We will also try to apply the underlying QCD and RAI mechanisms in the RDMA protocol stack and explore the possibility to improve the performance of RDMA-based DCNs.   

%\section*{Acknowledgment}

\bibliographystyle{acm}
\bibliography{sigproc} 

\end{document}